\definecolor{mycolor}{rgb}{0, 0, 0}
\title{Minimum Plane Bichromatic Spanning Trees}
\author{
Hugo A. Akitaya\thanks{Miner School of Computer \& Information Sciences, University of Massachusetts, Lowell, USA, \texttt{hugo\_akitaya@uml.edu}. Research supported by the NSF award CCF-2348067.} 
\and Ahmad Biniaz\thanks{School of Computer Science, University of Windsor, Canada, \texttt{abiniaz@uwindsor.ca}. Research supported by NSERC.}
\and Erik D. Demaine\thanks{Computer Science and Artificial Intelligence Lab, Massachusetts Institute of Technology, Cambridge,
MA, USA, \texttt{edemaine@mit.edu}.} 
\and  Linda Kleist\thanks{Institute of Computer Science, Universität Potsdam, Potsdam, Germany, \texttt{kleist@cs.uni-potsdam.de}.}
\and Frederick Stock\thanks{Miner School of Computer \& Information Sciences, University of Massachusetts, Lowell, USA, \texttt{frederick\_stock@student.uml.edu}.}
\and Csaba D. T\'{o}th\thanks{Department of Mathematics, California State University Northridge, Los Angeles, USA; and
Department of Computer Science, Tufts University, Medford, MA, USA, \texttt{csaba.toth@csun.edu}. Research supported in part by the NSF award DMS-2154347.} 
 }
\date{}
\newtheorem{lemma}{Lemma}
\newtheorem{proposition}{Proposition}
\newtheorem{theorem}{Theorem}
\newtheorem*{problem*}{Problem}
\newtheorem*{claim*}{Claim}
\newtheorem*{invariant*}{Invariant}
\definecolor{mycolor}{rgb}{0, 0, 0}
\newcommand{\revised}[1]{{\color{black} #1}}
\newcommand{\etal}{{et~al.}}
\newcommand{\OPT}{\mathrm{OPT}}
\newcommand{\MST}{{MST}}
\newcommand{\MBST}{{MinBST}}
\newcommand{\MPBST}{MinPBST}
\DeclareMathOperator{\EX}{\mathbb{E}}
\DeclareMathOperator{\PR}{\mathrm{Pr}}
\newcommand{\spath}[2]{\delta(#1,#2)}
\begin{document}
\maketitle
\begin{abstract}
For a set of red and blue points in the plane, a {\em minimum bichromatic spanning tree} (\MBST{}) is a shortest spanning tree of the points such that every edge has a red and a blue endpoint. A \MBST{} can be computed in $O(n\log n)$ time where $n$ is the number of points. In contrast to the standard Euclidean MST, which is always {\em plane} (noncrossing), a \MBST{} may have edges that cross each other. 
However, we prove that a \MBST{} is 
{\color{mycolor}quasi-plane,}
that is, it does not contain three pairwise crossing edges{\color{mycolor}, and we determine the maximum number of crossings.}

Moreover, {\color{mycolor}we study the problem of finding}
a {\em minimum plane bichromatic spanning tree} (\MPBST{}) which is a shortest bichromatic spanning tree with pairwise noncrossing edges. This problem is known to be NP-hard. The previous best approximation algorithm, due to Borgelt~\etal~(2009), has a ratio of $O(\sqrt{n})$. It is also known that 
the optimum solution can be computed
in polynomial time in some special cases, for instance, when the points are in convex position, collinear, semi-collinear,
or when one color class has constant size. We present an $O(\log n)$-factor approximation algorithm for the general case. 

\end{abstract}

\section{Introduction}

Computing a minimum spanning tree (\MST) in a graph is a well-studied problem. There exist many algorithms for this problem, among which one can mention the celebrated Kruskal's algorithm \cite{Kruskal}, Prim's algorithm \cite{Prim}, and Bor\r{u}vka's algorithm \cite{Boruvka}. The running time of these algorithms depends on the number of vertices and  edges of the input graph. For geometric graphs, where the vertices are points in the plane,
\revised{their running time depends} only on the number of vertices.

For a set $S$ of $n$ points in the plane a Euclidean MST (i.e., an 
MST of the complete 
graph on $S$ \revised{with straight-line edges and Euclidean edge weights}) can be computed in $O(n\log n)$ time. When the points of $S$ are colored by two colors, say red and blue, and every edge is required to have a red and a blue endpoint, then a spanning tree is referred to as a \emph{bichromatic spanning tree}.
A {\em minimum bichromatic spanning tree} (\MBST{}) is a bichromatic spanning tree of minimum total edge length. A \MBST{} on $S$ can be computed in $O(n\log n)$ time \cite{Biniaz2018}. When the points are collinear (all lie on a straight line) and are given in sorted order along the line this problem can be solved in linear time \cite{BandyapadhyayBB21}.

We say that two line segments {\em cross} if they share an interior point; this configuration is called a {\em crossing}. A tree is called {\em plane} if its edges are pairwise noncrossing.
The standard Euclidean \MST{} is always plane. This property is ensured by the triangle inequality, because the tree can be made shorter by replacing any two crossing edges with two noncrossing edges. 
The noncrossing property does not necessarily hold for a \MBST{}, see Figure~\ref{MinBST-fig} for an example. Two crossing edges in this example cannot be replaced with two noncrossing edges because, otherwise, we would either introduce monochromatic edges (that connect points of the same color) or disconnect the tree into two components. 

\begin{figure}[ht!]
	\centering
\includegraphics[width=.3\columnwidth]{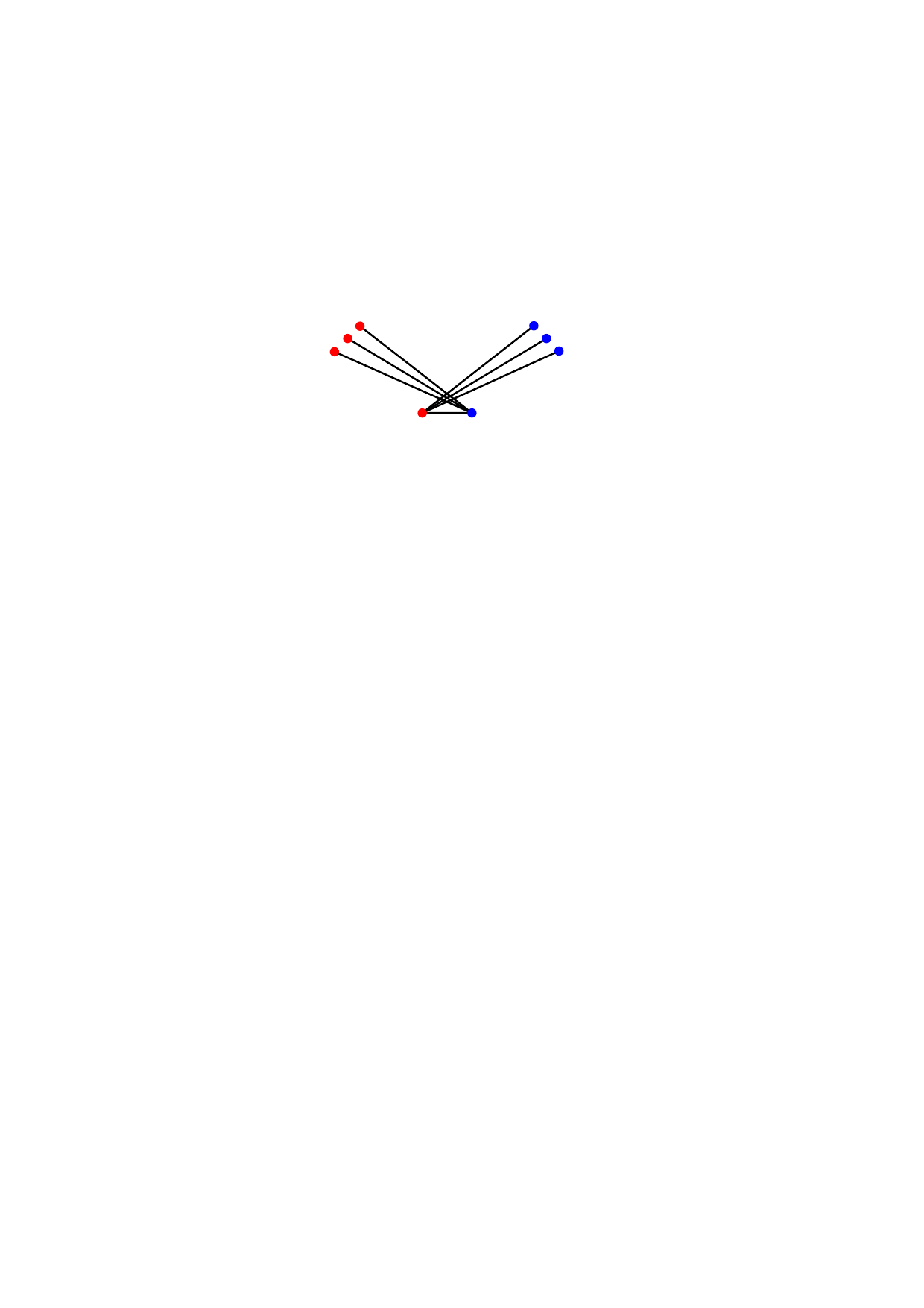}
	\caption{A bicolored point set and its  minimum bichromatic spanning tree (\MBST{}).}
	\label{MinBST-fig}
\end{figure}

Edge crossings in geometric graphs are usually undesirable as they could lead to unwanted situations such as collisions in motion planning, inconsistency in VLSI layout, and interference in wireless networks. They are also undesirable in the context of graph drawing and network visualization.
Therefore, it is natural to ask for a \emph{minimum plane bichromatic spanning tree} (\MPBST{}), a bichromatic spanning tree that is noncrossing and has minimum total edge length. {\color{mycolor}  Borgelt~\etal~\cite{Borgelt2009} proved that the problem of finding a \MPBST{} is NP-hard. They also present a polynomial-time approximation algorithm with approximation factor $O(\sqrt{n})$.}

{\color{mycolor}
In this paper we study the \MBST{} and \MPBST{} problems from  combinatorial and computational points of view. First we present an approximation algorithm, with a better factor
, for the \MPBST{} problem.} Then we prove some interesting structural properties of the \MBST{}. 

\subsection{Related work}

Problems related to bichromatic objects (such as points and lines) have been actively studied in computational geometry, for instance, the problems related to bichromatic intersection \cite{Agarwal1990,Chan10,Chan2011,Mairson1988}, bichromatic separation \cite{Alegria2023,Armaselu2019,Arora2004,Bespamyatnikh2000,Demaine2005}, and noncrossing bichromatic connection \cite{Abellanas1999,Abu-Affash2021,Biniaz2019,Biniaz2018a,Borgelt2009,Hoffmann2014,Kaneko1998,Kano2013a}. We refer the interested reader to the survey by Kaneko and Kano~\cite{Kaneko2003}.

The $O(\sqrt{n})$-approximation algorithm of  Borgelt~\etal~\cite{Borgelt2009} for the  \MPBST{} problem  lays a $(\sqrt{n}\times\sqrt{n})$-grid over the points, then  identifies a subset of grid cells as {\em core} regions and computes their Voronoi diagram, then builds a tree inside each Voronoi cell, and finally combines the trees. 

{\color{mycolor}
Let $\rho_n$ be the supremum ratio of the length of \MPBST{} to the length of \MBST{} over all sets of $n$ bichromatic points. 
Grantson~\etal~\cite{Grantson2005} show that $3/2\leq \rho_n\le n$ for all $n\geq 4$; and ask whether the upper bound can be improved. It is easily seen from the algorithm of Borgelt~et~al.~\cite{Borgelt2009} that $\rho_n\le O(\sqrt{n})$ because the planarity of the optimal solution is not used in the analysis of the approximation ratio---indeed the analysis would work even with respect to the \MBST{}.
}

Some special cases of the \MPBST{} problem can be solved to optimality in polynomial time. For instance, the problem can be solved in $O(n^2)$ time when points are collinear \cite{BandyapadhyayBB21}, in $O(n^3)$ time when points are in convex position \cite{Borgelt2009}, in $O(n^5)$ time when points are semi-collinear (points in one color class are on a line and all other points are on one side of the line) \cite{Biniaz2019}, and in $n^{O(k^5)}$ time when one color class has $k$ points for some constant $k$ \cite{Borgelt2009}.

One might wonder if a greedy strategy could achieve a better approximation ratio. A modified version of Kruskal's algorithm, that 
{\color{mycolor}successively adds a}
shortest bichromatic edge that 
{\color{mycolor}creates neither a cycle nor a crossing,}
is referred to as the {\em greedy algorithm} \cite{Borgelt2009,Grantson2005}. This algorithm, as noted in \cite[Figure~1]{Borgelt2009}, 
{\color{mycolor}does not always return a planar bichromatic tree (it does not always terminate:} 
there may be a point of one color that cannot see 
any point of the opposite color).


Abu-Affash~\etal~\cite{Abu-Affash2021} studied the {\em bottleneck} version of the plane bichromatic spanning tree problem where the goal is to minimize the length of the longest edge. They prove that this problem is NP-hard, and present an $8\sqrt{2}$-approximation algorithm.  

\subsection{Quasi-planarity}

Quasi-planarity is a measure of the proximity of an (abstract or geometric) graph to planarity. For an integer $k\ge 2$, a graph is called \emph{$k$-quasi-planar} if it can be drawn in the plane such that no $k$ edges pairwise cross. 
By this definition, a planar graph is 2-quasi-planar. A 3-quasi-planar graph is also called {\em quasi-planar}. Problems on $k$-quasi-planarity are closely related to Tur\'{a}n-type problems on the intersection graph of line segments in the plane~\cite{Agarwal1997,Angelini2020,Capoyleas1992,Fox2022}. They are also related to the size of {\em crossing families}  (pairwise crossing edges) determined by points in the plane \cite{Aronov1994,Pach2021}. Perhaps a most notable question on quasi-planarity is a conjecture by Pach, Shahrokhi, and Szegedy~\cite{Pach1996} that for any \revised{fixed} integer $k\ge 3$, \revised{there exists a constant $c_k$ such that} every $n$-vertex $k$-quasi-planar graph has \revised{at most $c_kn$} edges.
This conjecture has been verified for $k=3$ \cite{Agarwal1997} and $k=4$  \cite{Ackerman2009}.
{\color{mycolor}

A drawing of a graph is called \emph{$k$-quasi-plane} if no $k$ edges in the drawing pairwise cross, and a drawing is \emph{quasi-plane} if it is 3-quasi-plane. For example, the drawing of a tree in Figure~\ref{MinBST-fig} is quasi-plane. This concept plays an important role in decompositions of geometric graphs: Aichholzer et al.~\cite{AichholzerOOPSS22} showed recently that the complete geometric graph on $2n$ points in the plane can always be decomposed into $n$ quasi-plane spanning trees (but not necessarily into $n$ plane spanning trees). 
}

\subsection{Our contributions} 

In Section~\ref{approx-section} we present a randomized approximation algorithm with factor $O(\log n)$ for the \MPBST{} problem. Our algorithm computes a randomly shifted quadtree on the points, and then builds a planar bichromatic tree in a bottom-up fashion from the leaves of the quadtree towards the root. {\color{mycolor}We then derandomize the algorithm by discretizing the random shifts.
Our weight analysis shows that $|\text{\MBST}(S)|\leq |\text{\MPBST}(S)|\leq O(\log n)\cdot |\text{\MBST}(S)|$ for every set $S$ of $n$ bichromatic points, which implies that $\rho_n = O(\log n)$.}

In Section~\ref{quasi-section} we prove that every \MBST{} is  quasi-plane, {\color{mycolor}i.e., no three edges pairwise cross in its inherited drawing (determined by the point set). In a sense,} this means that \MBST{} is not far from 
{\color{mycolor}plane graphs. In Section~\ref{crossings-section} we determine the maximum number of crossings in a \MBST{}}.
We conclude with a list of open problems in Section~\ref{sec:conclusion}.

\section{Approximation Algorithm for \MPBST{}}
\label{approx-section}
{\color{mycolor}In this section we first present a randomized approximation algorithm for the \MPBST{} problem. Then we show how to derandomize the algorithm at the expense of increasing the running time by a quadratic factor. The  following theorem summarizes our result in this section. \revised{Throughout this section we consider point sets in the plane that are in general position, that is, no three points lie on a straight line.}

\revised{
\begin{theorem}
There is a randomized algorithm that, given a set of $n$ red and blue points in the plane in general position, returns a plane bichromatic spanning tree of expected weight at most $O(\log n)$ times the optimum, and runs in $O(n\log^2 n)$ time. The algorithm can be derandomized by increasing the running time by a factor of $O(n^2)$.
\end{theorem}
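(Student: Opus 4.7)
The plan is to combine a randomly shifted quadtree decomposition with a bottom-up tree-construction procedure, and then to derandomize by enumerating combinatorially distinct shifts. First, I would enclose $S$ in an axis-aligned bounding square of side $L=\Theta(\mathrm{diam}(S))$, choose a shift vector $\tau$ uniformly at random from $[0,L)^2$, and build a (compressed) shifted quadtree of depth $O(\log n)$, refining cells until each contains at most one point of $S$.

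Next, I would process the quadtree bottom-up, maintaining the invariant that at every nonempty cell $C$ we have a plane bichromatic spanning forest on $S\cap C$ together with a small set of designated ``port'' vertices available for external connections. At a leaf cell holding a single point the forest is trivial and the point is its own port. At an internal cell $C$ with children $C_1,\dots,C_4$, we take the union of the forests already built for the children and add a short gadget inside $C$, routed through a central region and along unused corridors near the cell boundary, that glues together ports of opposite colors drawn from different children without crossing any previously placed edge. If a child is monochromatic, we carry its forest upward unchanged and postpone its connection to the first ancestor cell containing a point of the opposite color. The hardest part will be designing this merge step: the bichromatic constraint forbids the simple ``connect closest pair of children'' strategy that works for the ordinary Euclidean MST, so one must argue that inside a box of diameter $O(2^i)$ there is always enough room for a planar, noncrossing, bichromatic gadget of total length $O(2^i)$ that is compatible with the port choices inherited from descendants.

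For the approximation ratio I would invoke the standard random-shift argument. For each edge $e$ of an optimal \MBST{} of length $\ell$, the probability that the two endpoints of $e$ lie in distinct cells at level $i$ (cell side $\Theta(2^i)$) is $O(\ell/2^i)$. In that event the gadget at level $i$ can charge at most $O(2^i)$ extra edge length to $e$, so summing over the $O(\log n)$ levels gives an expected charge of $O(\log n)\cdot \ell$ per edge of the \MBST{}. Linearity of expectation then yields expected total weight $\EX[|\text{ALG}(S)|]\le O(\log n)\cdot|\text{\MBST}(S)|\le O(\log n)\cdot|\text{\MPBST}(S)|$. The $O(n\log^2 n)$ running time follows from the $O(n)$-size compressed quadtree and $O(\log n)$ work per node using standard data structures, together with the $O(n\log n)$ construction time of the quadtree itself.

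Finally, for derandomization I would observe that as $\tau$ varies continuously in $[0,L)^2$ the combinatorial structure of the quadtree changes only when some point of $S$ crosses a grid line at some level; the critical values of $\tau_x$ and $\tau_y$ therefore partition $[0,L)^2$ into only polynomially many combinatorial cells, and one can discretize to a list of $O(n^2)$ candidate shifts that realize every combinatorial type. Running the deterministic part of the algorithm on each candidate and returning the lightest output produces, by the probabilistic method, a plane bichromatic spanning tree of weight at most $O(\log n)\cdot|\text{\MBST}(S)|$, at the cost of an $O(n^2)$ factor in the running time. I expect the planarity of the merge gadget, and the accompanying case analysis when some children are monochromatic, to be the only genuine obstacle; the weight bound and the derandomization then follow from standard randomly-shifted-grid machinery.
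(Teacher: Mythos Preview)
Your high-level framework—randomly shifted quadtree, bottom-up merging, per-level charging, derandomization by enumerating shifts—matches the paper exactly. But you have correctly identified, and then left open, the one step that carries all the content: the merge. The paper does \emph{not} use ports, corridors, or routing through a central region. It exploits the fact that each merge is across a straight separating line, so one party lies entirely outside the convex hull of the other. Two elementary lemmas then do all the work: (i) any bichromatic point set admits a plane bichromatic spanning tree (used when both halves are monochromatic but the union is bichromatic), and (ii) any point $q$ outside the convex hull of a plane connected graph $G$ sees some edge $(a,b)$ of $G$ entirely, so one of $(q,a),(q,b)$ is a bichromatic noncrossing attachment. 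Case~2 (both halves already trees) adds one such edge; Case~3 (one half monochromatic with $k$ points) sorts those $k$ points by distance to the separating line and attaches them one at a time via~(ii). No gadgets, no reserved regions.

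Two further places where your sketch would need tightening to go through. First, the charging: your claim ``the gadget at level $i$ charges at most $O(2^i)$ to $e$'' is not automatic, because a single merge can add $k$ edges, not $O(1)$. The paper's accounting observes that whenever a monochromatic sub-box holds $k$ points, \emph{every} bichromatic spanning tree must have at least $k$ edges leaving that box, so the $k$ new edges (each of length $O(\text{side})$) can be charged injectively to $k$ distinct optimal edges crossing a level-$(i{+}1)$ boundary. Second, the paper stops subdividing at side length $1/n$ rather than at single-point cells; after scaling so that $\OPT\geq 1$, the total weight of all leaf-level trees is then at most $n\cdot(\sqrt{2}/n)\leq \sqrt{2}\cdot\OPT$, which cleanly bounds the base case and keeps the depth at $O(\log n)$ without appealing to compression or spread. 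For derandomization the paper simply discretizes the shift to the grid $\{0,\tfrac{1}{n},\ldots,\tfrac{n-1}{n}\}^2$ and redoes the expectation bound with a slightly worse constant; this gives exactly $n^2$ candidates, whereas your ``combinatorially distinct shifts'' argument, taken at face value over all $O(\log n)$ levels, produces $\Theta(n^2)$ critical values \emph{per axis} and hence $\Theta(n^4)$ cells.
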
}
}

Let $S$ be a set of $n$ red and blue points in the plane. To simplify our arguments we assume that $n$ is a power of $2$. Let $\OPT{}$ denote the length of a minimum bichromatic spanning tree on $S$ (and note that $\OPT$ is an obvious lower bound for the length of a minimum \emph{plane} bichromatic spanning tree on $S$). Our algorithm computes a plane bichromatic spanning tree of \revised{expected} length $O(\log n)\cdot\OPT$.

\subsection{Preliminaries for the algorithm}

The following folklore lemma, though very simple, plays an important role in our construction. 

\begin{lemma}
\label{exsitence-lemma}
    Every set of $n$ red and blue points  \revised{in general position} in the plane, containing at least one red and at least one blue point, admits a plane bichromatic spanning tree. Such a tree can be computed in $O(n\log n)$ time. 
\end{lemma}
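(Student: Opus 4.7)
The plan is to give a direct construction. First I would pick a point $r \in S$ on the convex hull of $S$; by swapping colors if necessary assume $r$ is red. Because $r$ is a convex hull vertex and the points are in general position, all other points of $S$ lie strictly in an open half-plane bounded by a line through $r$, so they admit a well-defined angular order $p_1,\dots,p_{n-1}$ around $r$ whose total angular span is strictly less than $\pi$. Let $b_1,\dots,b_m$ denote the blue points indexed in this angular order.

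I would then build the tree with two types of edges: for each blue $p_j$, add the star edge $rp_j$; for each non-root red $p_i$, add an edge from $p_i$ to its angularly nearest blue, preferring the previous blue (the largest index less than $i$ whose point is blue) when one exists, and otherwise the next blue. Exactly $n-1$ edges are produced, every blue is connected directly to $r$, and every non-root red is connected to $r$ via one intermediate blue, so the resulting bichromatic graph is a spanning tree. The construction takes $O(n\log n)$ time overall: the convex hull and angular sort each take $O(n\log n)$, and a single scan in the sorted order identifies each red's nearest blue and emits the edges.

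The main step, and the main obstacle, is showing the tree is plane. Grouping the edges by their blue endpoint, all edges incident to $b_k$ form a star at $b_k$ and so cannot cross one another internally. By the attachment rule the reds attached to $b_1$ have angular positions in $(-\infty,b_1)\cup(b_1,b_2)$, the reds attached to $b_k$ for $2\le k\le m-1$ lie in $(b_k,b_{k+1})$, and the reds attached to $b_m$ lie in $(b_m,+\infty)$; thus the $k$-th star is contained in an angular wedge $W_k$ around $r$, and the wedges $W_1,\dots,W_m$ are pairwise disjoint. Because $r$ lies on the convex hull, each wedge has angular width less than $\pi$ and is therefore convex, so every segment between two points of a wedge stays inside it; hence edges from distinct wedges cannot cross. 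The convex-hull hypothesis on $r$ is essential: if $r$ were interior, the angular order would wrap around the full circle, the wedges could overlap, and the noncrossing guarantee would collapse.
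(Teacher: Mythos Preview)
Your argument is correct and follows essentially the same construction the paper sketches (citing \cite{Biniaz2019}): fix one point $r$, connect $r$ to every point of the opposite color, and then attach each remaining point to a blue on the boundary of its angular cone. The only difference is that the paper allows $r$ to be an \emph{arbitrary} red point and defers the full planarity argument to the reference, whereas you take $r$ on the convex hull so that every wedge has aperture less than $\pi$, which lets you give the self-contained noncrossing proof directly.
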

{\color{mycolor}A proof of Lemma~\ref{exsitence-lemma} can be found in \cite{Biniaz2019}. Essentially such a tree can be constructed by connecting an arbitrary red point to all blue points (this partitions the plane into cones) and then connecting red points in each cone to a blue point on its boundary.}

For a connected geometric graph $G$ and a point $q$ in the plane, we say that $q$ {\em sees} an edge $(a,b)$ of $G$ if 
the interior of the triangle $\bigtriangleup qab$ is disjoint from  vertices and edges of $G$.
In other words, the entirity of the edge $(a,b)$ is visible from $q$. The following lemma (that is implied from  \cite[Lemma 2.1]{HurtadoKRT08}) also plays an important role in our construction. 

\begin{lemma}
\label{merge-point-cor}
    Let $G$ be a connected plane geometric graph \revised{with $n$ vertices} and $q$ be a point outside the convex hull of the vertices of $G$. Then $q$ sees an edge of $G$. \revised{Such an edge can be found in $O(n\log n)$ time.}
\end{lemma}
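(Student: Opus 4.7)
My plan is to prove the lemma by a shrinking-triangle argument that exploits the planarity of $G$.

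Call an edge $(a,b) \in E(G)$ a \emph{candidate} if both $a$ and $b$ are visible from $q$, meaning $\overline{qa}$ and $\overline{qb}$ are disjoint from every edge of $G$. The key observation is that for a candidate edge, no edge of $G$ can enter the interior of $\triangle qab$: such a crossing edge would have to enter through $\overline{qa}$, $\overline{qb}$, or $\overline{ab}$; the first two are forbidden by the candidate property, and the third by the planarity of $G$. Hence $q$ sees $(a,b)$ if and only if the interior of $\triangle qab$ contains no vertex of $V$.

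Given a candidate $(a,b)$ whose triangle contains an interior vertex $u \in V$, I shrink as follows. Since the interior of $\triangle qab$ is free of $G$-edges, the segment $\overline{qu}$ lies in this interior and meets no $G$-edge, so $u$ is visible from $q$. By connectedness, $u$ has some neighbor $w$ in $G$, and $w$ must lie in $\triangle qab$---otherwise the segment $\overline{uw}$ would have to exit through $\overline{qa}$ or $\overline{qb}$ (not through $\overline{ab}$, by planarity), contradicting the candidate property. Thus $(u,w)$ is itself a candidate, $\triangle quw \subsetneq \triangle qab$, and the vertex $u$ (formerly interior) is now a vertex of the new triangle, so the interior vertex count strictly decreases. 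Iterating at most $n$ times yields a candidate whose triangle is empty, giving the desired witness edge.

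For the initial candidate and the algorithmic running time, I compute $CH(V)$ in $O(n\log n)$ time; the closest point on $CH(V)$ to $q$ determines a hull vertex $v^*$ with $\overline{qv^*}$ lying outside $\mathrm{int}(CH(V))$ and hence disjoint from every edge of $G$, so $v^*$ is visible. If two consecutive hull vertices on the visible arc from $q$ are $G$-adjacent, the corresponding edge has its triangle with $q$ in the hull's exterior and thus trivially empty, giving a witness immediately. Otherwise, I perform a radial sweep from $q$ over the visible arc, tracking the first-hit edge $e(\theta)$ as a piecewise-constant function whose breakpoints are at angles $\angle(qv)$ for $v \in V$, and select an interval $[\alpha,\beta]$ whose boundary angles coincide with both endpoints of the interval's edge, producing a starting candidate.

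The main obstacle in this plan is showing that such a full-coverage interval always exists; I plan to handle it by a case analysis of how $e(\cdot)$ transitions across critical angles---whether the ray passes through an endpoint of the current first-hit edge or an obstructing vertex in front of it---invoking planarity and the shrinking-triangle structure to force termination at a full-coverage edge. For the running time, the convex hull, angular sort, and ray-shooting structures each cost $O(n \log n)$; the $O(n)$ shrinking steps are each supported in $O(\log n)$ time by a standard sweep-line data structure, yielding the claimed $O(n \log n)$ bound.
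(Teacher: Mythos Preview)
The paper does not give its own proof of this lemma; it simply cites \cite{HurtadoKRT08} (and \cite{HershbergerS92} for the running time). So your attempt is being compared against an external reference rather than an argument in the paper.

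Your shrinking-triangle approach is natural, but the ``key observation'' is false as stated, and this breaks the inductive step. You claim that for a candidate edge $(a,b)$ no edge of $G$ can enter the interior of $\triangle qab$, arguing that such an edge would have to cross one of the three sides. This overlooks edges of $G$ that are \emph{incident} to $a$ or $b$ and go into the interior: they enter the triangle through a vertex, not through the relative interior of a side. Concretely, take $q=(0,0)$, $a=(2,1)$, $b=(2,-1)$, $u=(1.9,0)$, $v=(1,-0.4)$, and let $G$ be the star with center $a$ and leaves $b,u,v$. Then $(a,b)$ is a candidate (both $a$ and $b$ are visible from $q$), the edge $(a,v)$ lies in the interior of $\triangle qab$, and the segment $\overline{qu}$ crosses $(a,v)$. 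Hence the interior is \emph{not} free of $G$-edges, and the interior vertex $u$ is \emph{not} visible from $q$; your shrinking step ``$u$ is visible, hence $(u,w)$ is a candidate'' fails. (Your ``if and only if'' characterization of when $q$ sees $(a,b)$ is nevertheless correct, but for a different reason: if the interior contains no vertex, then any edge meeting the interior would have both endpoints in $\{a,b\}$, hence equal $(a,b)$ itself.)

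The argument is repairable: instead of picking an arbitrary interior vertex, shoot a ray from $q$ into $\triangle qab$ and take the first object of $G$ it meets. If it is a vertex, that vertex is visible and you can proceed; if it is an edge $(x,y)$, then as you correctly observe both $x$ and $y$ lie in the closed triangle, so $(x,y)$ is itself a candidate with $\triangle qxy\subsetneq\triangle qab$, and you recurse. This is essentially the argument in \cite{HurtadoKRT08}. Your treatment of the initial candidate and the $O(n\log n)$ bound is also left sketchy (you explicitly flag the ``full-coverage interval'' issue as an obstacle), so as written the running-time claim is not yet established either.
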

Note that the condition that $q$ lies outside of the convex hull of $G$ is necessary, as otherwise $q$ may not see any edge of $G$ entirely. The application of this lemma to our algorithm is that if $G$ is properly colored and a vertex sees an edge $(a,b)$, then $(q,a)$ or $(q,b)$ is bichromatic and does not cross any edges of $G$. This idea was previously used in~\cite{HoffmannST10,Hoffmann2014,HurtadoKRT08}.

\subsection{The algorithm}
\label{algorithm-section}
After a suitable scaling, we may assume that the smallest axis-aligned square containing $S$  has side length $1$. After a suitable translation, we may assume that the lower left corner of this square is the point $(1,1)$; and so its top right corner is $(2,2)$ as in Figure~\ref{quadtree-fig}(a). Observe that \[\OPT{}\ge 1.\] 

Our algorithm employs a randomly shifted quadtree as in Arora’s PTAS for the Euclidean TSP~\cite{Arora1998}.  Let $Q$ be a $2\times 2$ axis-aligned square whose lower left corner is the origin---$Q$ contains all points of $S$. Subdivide $Q$ into four congruent squares, and recurse until $Q$ is subdivided into squares of side length $\nicefrac{1}{n}$ as in Figure~\ref{quadtree-fig}(a). The depth of this recursion is $1+\log n$. For the purpose of shifting, pick two real numbers $x$ and $y$ in the interval $[0,1]$ uniformly at random. Then translate $Q$ such that its lower left corner becomes $(x,y)$ as in Figure~\ref{quadtree-fig}(b). This process is called a {\em random shift}. The points of $S$ remain in $Q$ after the shift. We obtain a quadtree subdivision of $S$ of depth at most $1+\log n$ with respect to the subdivision of $Q$, i.e., the lines of the quadtree subdivision are chosen from the lines of the subdivision of $Q$; see Figure~\ref{quadtree-fig}(c). 
The resulting quadtree is called (randomly) {\em shifted quadtree}. We stop the recursive subdivision at squares that have size $\nicefrac{1}{n}\times\nicefrac{1}{n}$ or that are \emph{empty} (disjoint from $S$) or {\em monochromatic} (have points of only one color). Therefore a leaf-square of the quadtree may contain more than one point of $S$.

\begin{figure}[ht!]
	\centering
\setlength{\tabcolsep}{0in}
	$\begin{tabular}{ccc}
		\multicolumn{1}{m{.33\columnwidth}}{\centering\vspace{0pt}\includegraphics[width=.33\columnwidth]{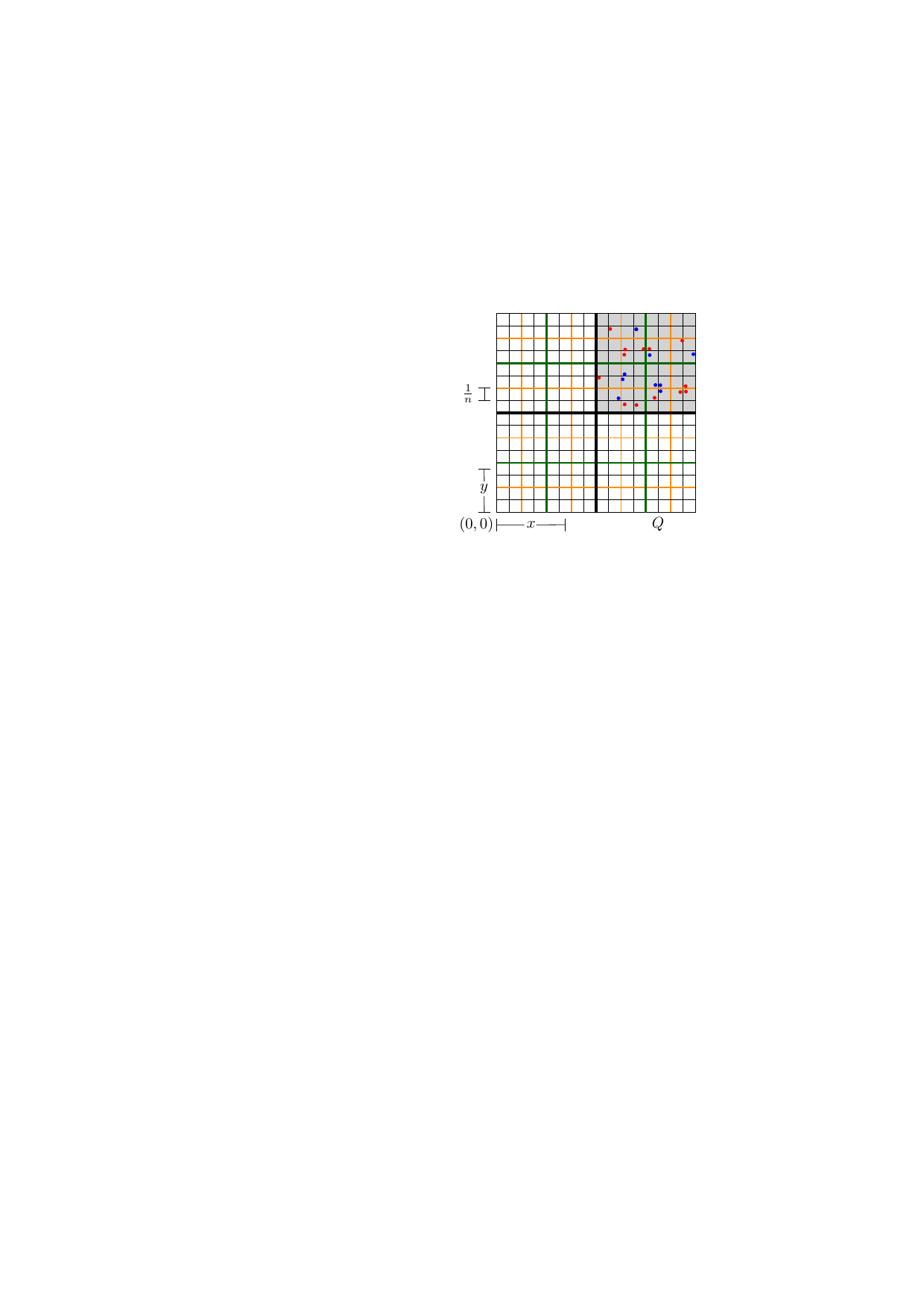}}
		&\multicolumn{1}{m{.4\columnwidth}}{\centering\vspace{10pt}\includegraphics[width=.39\columnwidth]{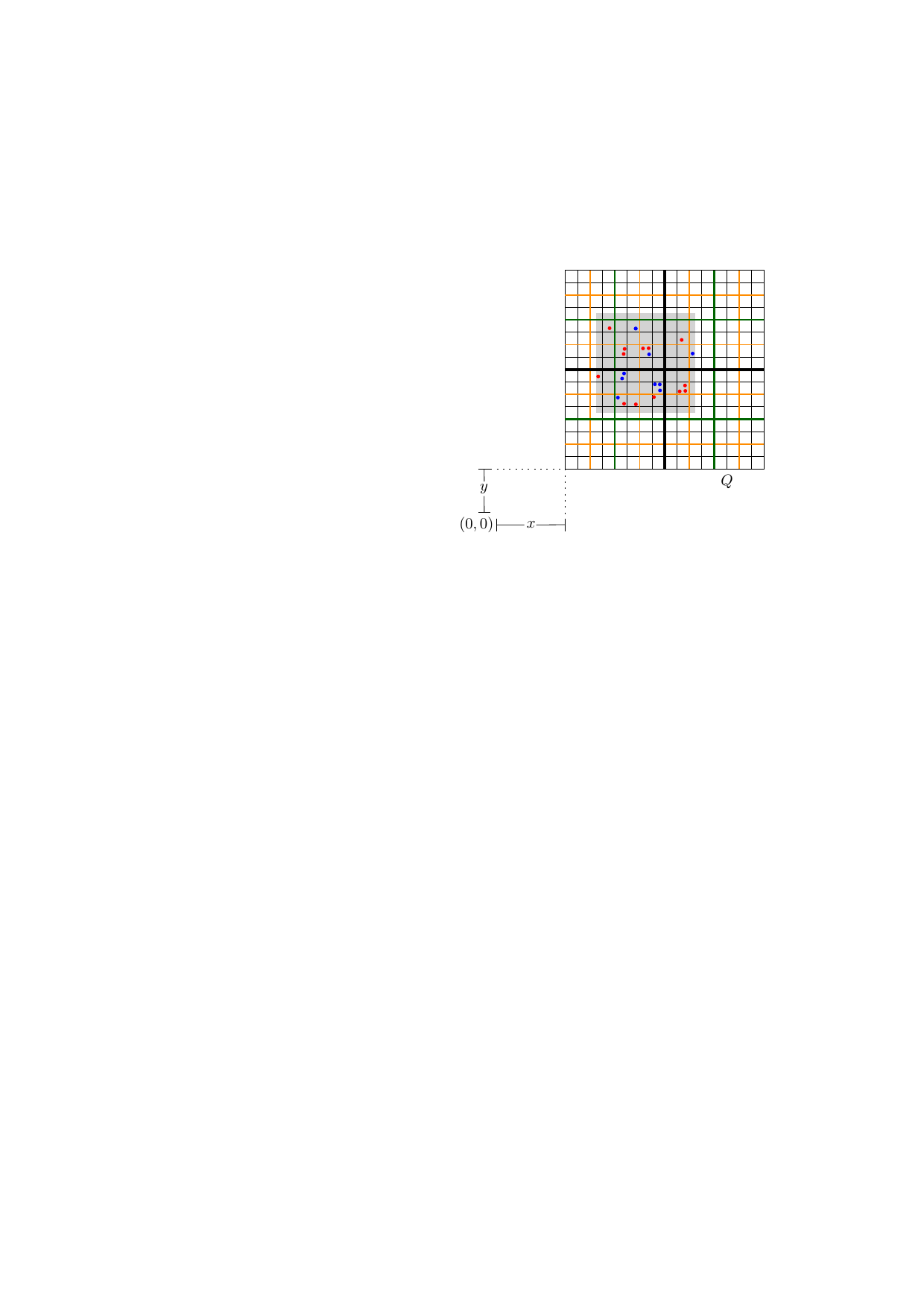}}
		&\multicolumn{1}{m{.27\columnwidth}}{\centering\vspace{-26pt}\includegraphics[width=.255\columnwidth]{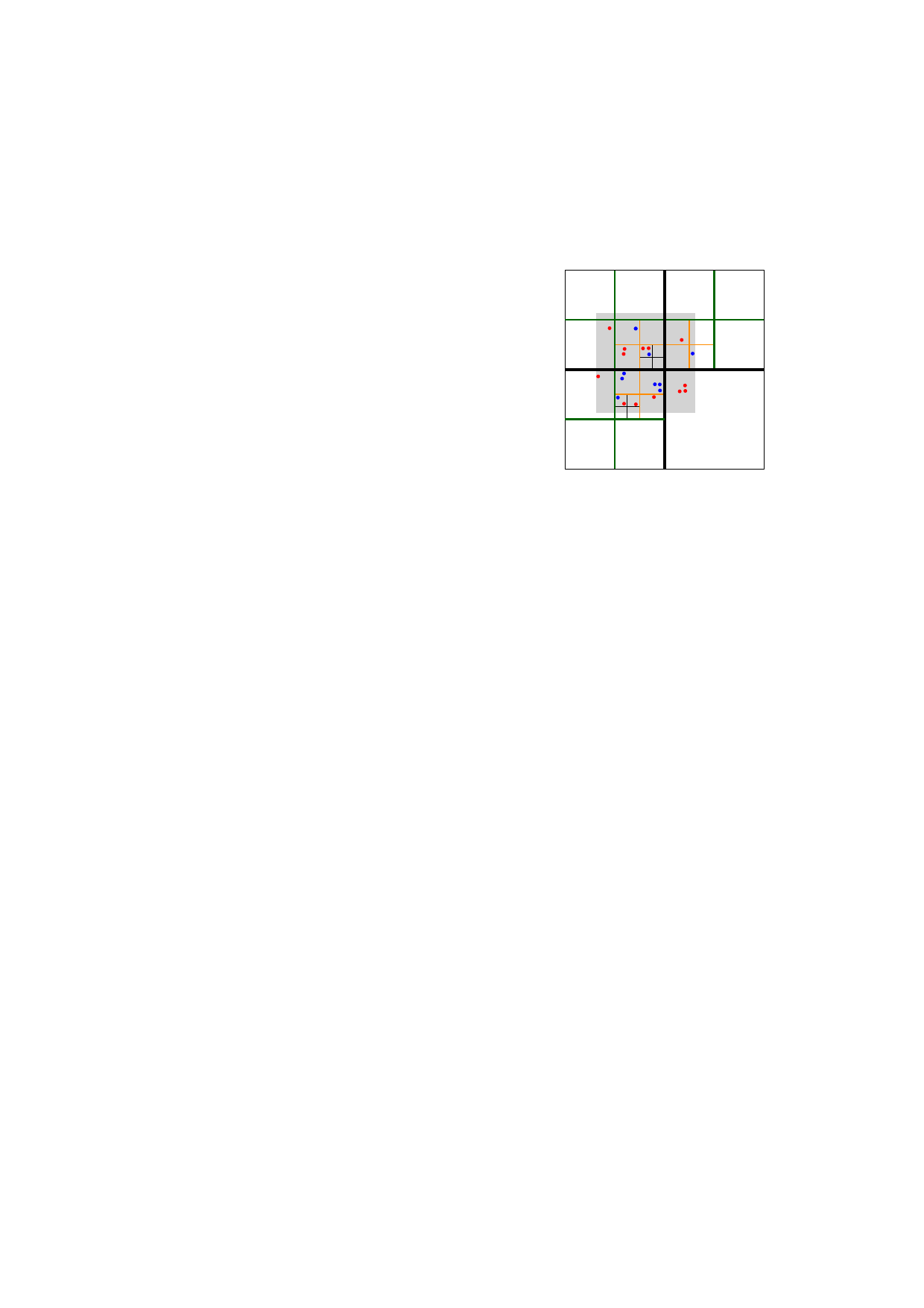}}
		\\
		(a)   &(b) &(c)
	\end{tabular}$	
\caption{(a) Shaded square contains $S$. (b) Translated subdivision of $Q$. (c) Randomly shifted quadtree on points of $S$ with respect to the subdivision of $Q$.}
	\label{quadtree-fig}
\end{figure}

At the root level (which we may consider as level $-1$) we have $Q$ which has size $2\times 2$. At each recursive level $i=0,1,\ldots,\log n$ we have squares of size $\nicefrac{1}{2^i} \times \nicefrac{1}{2^i}$ (Level $0$ stands for the first time that we subdivide $Q$). Thus at level $0$ we have four squares of size $1\times 1$, and at level $\log n$ we have squares of size $\nicefrac{1}{n}\times\nicefrac{1}{n}$.
Our strategy is to use the shifted quadtree and compute an approximate solution in a bottom-up fashion from the leaves towards the root. For each square that is {\em bichromatic} (contains points of both colors) we will find a plane bichromatic spanning tree of its points. For monochromatic squares, we do not do anything.
At the root level, we have $Q$ which contains $S$ and is bichromatic, so we will get an approximate plane bichromatic spanning tree of $S$.

At level $i=\log n$, we have squares of size $\nicefrac1n \times \nicefrac1n$, and thus the length of any edge in such squares is at most $\nicefrac{\sqrt{2}}{n}$. For each bichromatic square we compute a plane bichromatic tree arbitrarily, for instance by Lemma~\ref{exsitence-lemma}. For monochromatic squares we do nothing. The spanning trees for all the squares have less than $n$ edges in total. Hence the total length of all these trees is at most \[\frac{\sqrt{2}}{n}\cdot n=\sqrt{2}\le \sqrt{2}\cdot\OPT{}.\]

\begin{figure}[ht!]
	\centering
\setlength{\tabcolsep}{0in}
	$\begin{tabular}{ccc}
		\multicolumn{1}{m{.33\columnwidth}}{\centering\vspace{0pt}\includegraphics[width=.25\columnwidth]{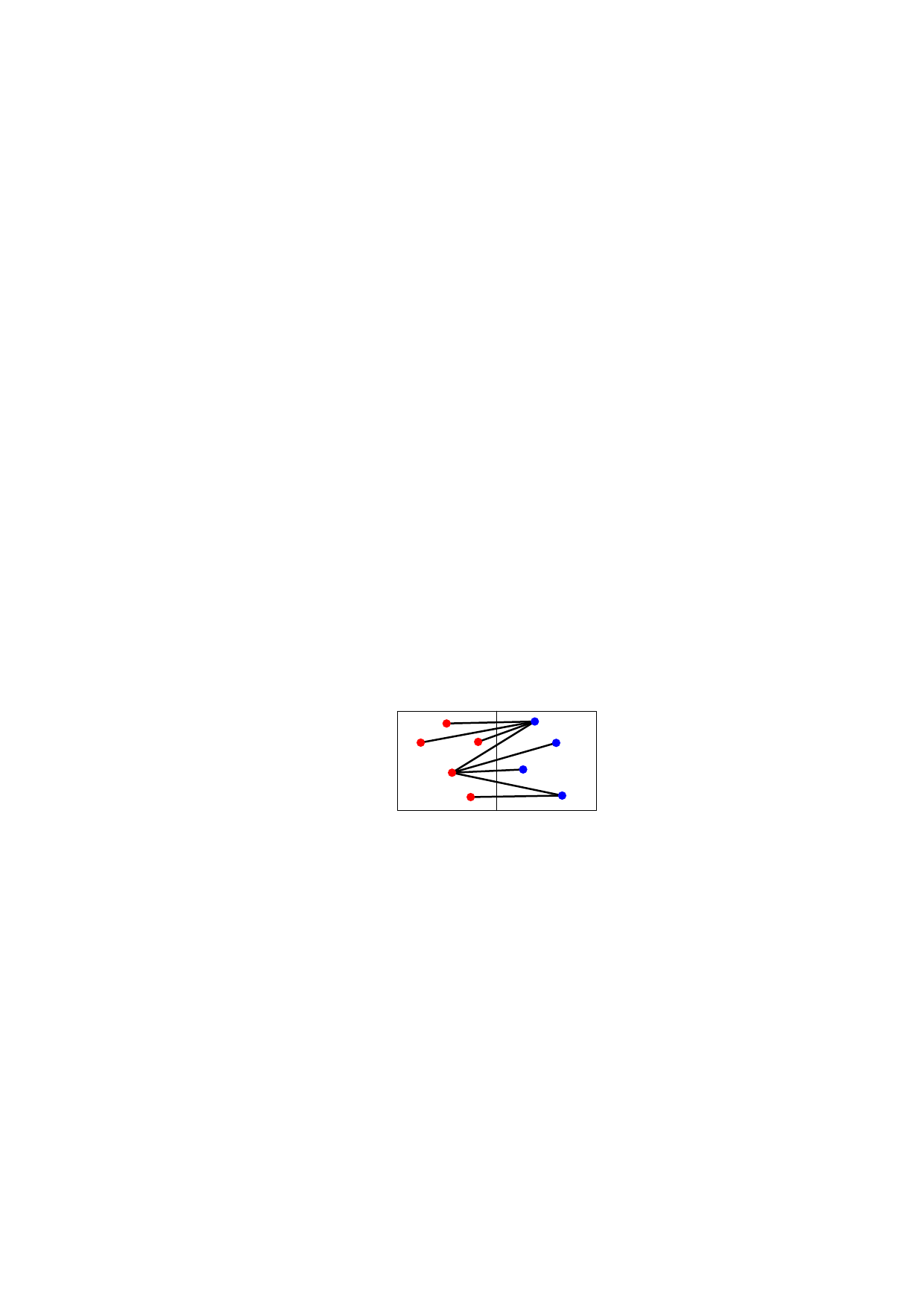}}
		&\multicolumn{1}{m{.33\columnwidth}}{\centering\vspace{0pt}\includegraphics[width=.25\columnwidth]{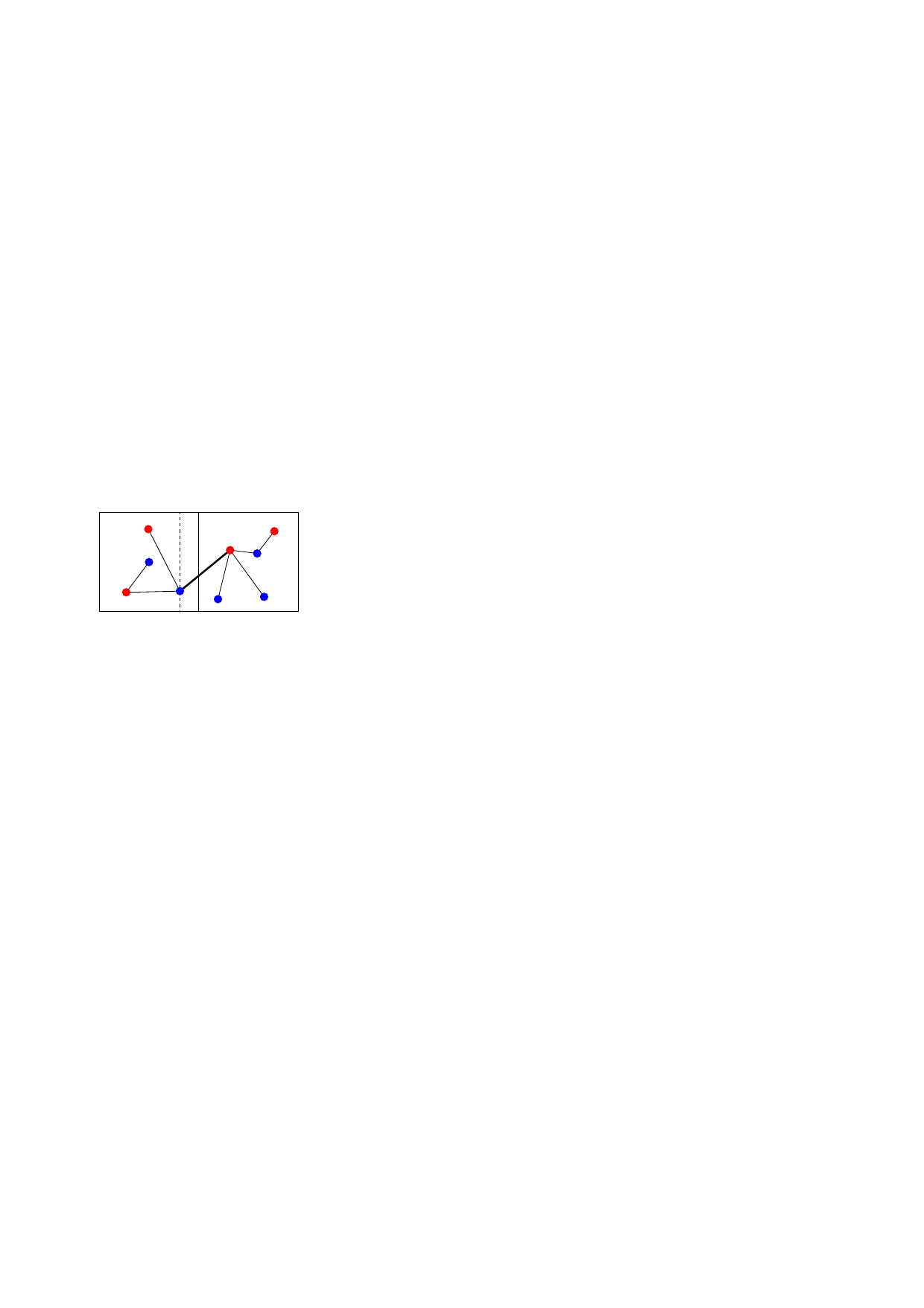}}
		&\multicolumn{1}{m{.33\columnwidth}}{\centering\vspace{0pt}\includegraphics[width=.25\columnwidth]{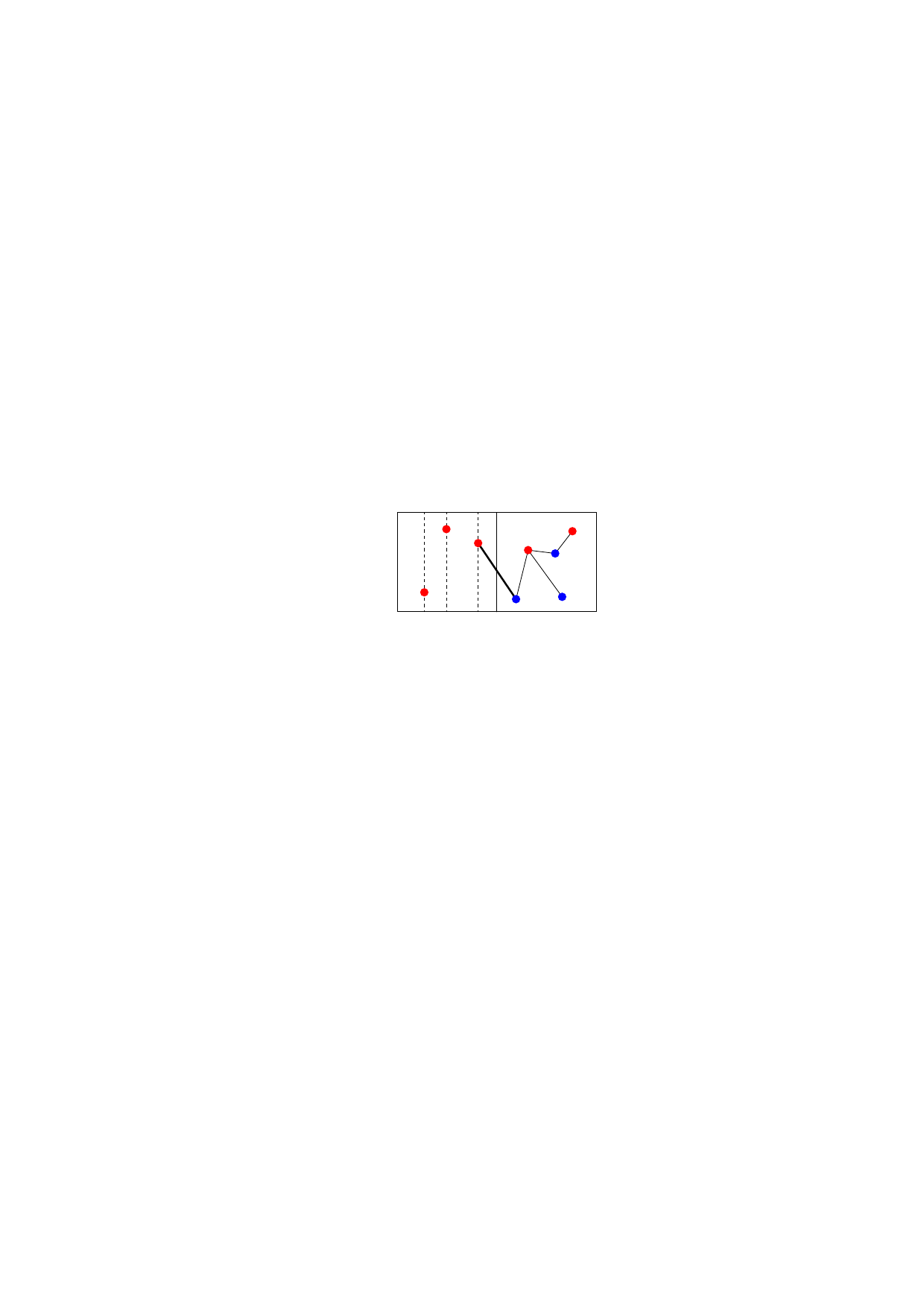}}
		\\
		(a)   &(b) &(c)
	\end{tabular}$	
	\caption{Merging two squares in the same row: (a) case 1, (b) case 2, and (c) case 3.}
	\label{merge-fig}
\end{figure}

At each level $i<\log n$, we have already solved the problem within squares of level $i+1$. Each square at level $i$ has size $\nicefrac{1}{2^i}\times \nicefrac{1}{2^i}$. If the square is monochromatic we do nothing. If the square is bichromatic, then it consists of four squares at level $i+1$. 
We merge the solutions of the four squares to obtain a solution for the level~$i$ square as follows. First we merge the solutions in adjacent squares in the same row (we have two such pairs), and then merge the two solutions in the two rows. Thus each merge is performed on two solutions that are separated by a (vertical or horizontal) line. For the merge we apply one of the following cases:
\begin{itemize}[leftmargin=18pt]
     \item[(1)] If each merge party is monochromatic but their union is bichromatic, then we construct an arbitrary plane bichromatic tree on the union, for example by Lemma~\ref{exsitence-lemma}. See Figure~\ref{merge-fig}(a).
    \item[(2)] If both merge parties are bichromatic (and hence are trees), then we take a point in one tree that is closest to the square (or rectangle) containing the other tree and merge them using Lemma~\ref{merge-point-cor}. See Figure~\ref{merge-fig}(b).
    \item[(3)] If one part is bichromatic (a tree) and the other is monochromatic, then we first sort the points in the monochromatic square (or rectangle) in increasing order according to their distance to the bichromatic square (or rectangle). Then we merge the points (one at a time) with the current bichromatic tree by using Lemma~\ref{merge-point-cor}. See Figure~\ref{merge-fig}(c).
\end{itemize}

In each case, the merge produces a plane bichromatic tree in the level-$i$ square. We process all squares in a bottom-up traversal of the quadtree. In the end, after processing level $-1$, we get a plane bichromatic spanning tree for points of $S$ in square $Q$. Denote this tree by $T$.

\subsection{Weight analysis}
\label{ssec:analysis}

We start by introducing an alternative measurement for the length of the optimal tree{\color{mycolor}, i.e., \MBST{}}. This new measurement, denoted by $\OPT'$, will be used to bound the length of our tree $T$. We say that a quadtree line is at level $i$ if it contains \revised{a side} of some level-$i$ square. \revised{A side} of a level-$i$ square (that is not a leaf) gets subdivided to yield \revised{sides} of two
squares at level $i+1$. Thus any quadtree line at level $i$ is also at levels  $i+1,i+2,\dots,\log n$.

For each level $i\in\{0,1,\dots,\log n\}$ we define a parameter $\OPT'_i$. 
Let $T_{\rm opt}$ be a {\color{mycolor}\MBST{}} for $S$. For each edge $e$ of $T_{\rm opt}$,
define the indicator variable 
\[
X_i(e) =\left\{ \begin{array}{ll} 
1 & \mbox{\rm if $e$ intersects the boundary of a level-$i$ square of the quadtree,} \\
0& \mbox{\rm otherwise.}
\end{array}\right.
\]
Now let \[\OPT_i'=\sum_{e\in E(T_{\rm opt})}  \frac{1}{2^i}X_i(e),\] that is, a weighted sum of $X_i(e)$ over all edges of $T_{\rm opt}$, and let \[\OPT'=\sum_{i=0}^{\log n} \OPT_i'.\]


The new measure $\OPT'$ can be arbitrarily large compared to $\OPT$ in the worst case. For example if the optimal solution is a path consisting of $n-1$ edges of small length say $2/n$, such that each of them intersects the vertical line at level $0$, then $\OPT$ is roughly $2$ but $\OPT'$ is at least $n-1$. However, using the random shift at the beginning of our algorithm, we can show that the expected value of $\OPT'$ is not very large compared to $\OPT$. 

\begin{lemma}
\label{opt-lemma-logn}
    $\EX[\OPT']\leq \sqrt{2}(1+\log n) \cdot\OPT.$
\end{lemma}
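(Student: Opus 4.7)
The plan is to bound $\EX[\OPT']$ by linearity of expectation, after first controlling $\Pr[X_i(e)=1]$ for each optimal edge $e$ and each level $i$ using the random shift.

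First I would expand
\[
\EX[\OPT']=\sum_{i=0}^{\log n}\frac{1}{2^i}\sum_{e\in E(T_{\rm opt})}\Pr[X_i(e)=1],
\]
and reduce the task to proving, for a fixed edge $e$ with Euclidean length $\ell(e)$ and horizontal/vertical extents $\Delta_x(e),\Delta_y(e)$, the bound $\Pr[X_i(e)=1]\le \sqrt{2}\,\ell(e)\cdot 2^i$.

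Next I would analyze the distribution of the level-$i$ grid lines. The lower-left corner of $Q$ is $(x,y)$ with $x,y\in[0,1]$ uniform, and the level-$i$ squares have side $1/2^i$, so the vertical grid lines at level $i$ sit at positions $x+k/2^i$ for integer $k$, and similarly for horizontal lines. Because the period $1/2^i$ divides $1$, the residue $x\bmod(1/2^i)$ is uniform on $[0,1/2^i]$, and likewise for $y$. Therefore the event that $e$ is crossed by a vertical level-$i$ grid line has probability at most $\min(\Delta_x(e)\cdot 2^i,1)$, and symmetrically for horizontal lines. A union bound then gives
\[
\Pr[X_i(e)=1]\le (\Delta_x(e)+\Delta_y(e))\cdot 2^i \le \sqrt{2}\,\ell(e)\cdot 2^i,
\]
where the last inequality is Cauchy--Schwarz: $\Delta_x+\Delta_y\le\sqrt{2(\Delta_x^2+\Delta_y^2)}=\sqrt{2}\,\ell(e)$.

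Plugging this into the expansion and using the trivial upper bound $\Pr[X_i(e)=1]\le 1$ to keep the per-level contribution under $\sqrt{2}\,\ell(e)$, I get
\[
\EX[\OPT'_i]=\sum_{e}\frac{1}{2^i}\Pr[X_i(e)=1]\le \sum_{e}\sqrt{2}\,\ell(e)=\sqrt{2}\cdot\OPT.
\]
Summing over $i=0,1,\dots,\log n$ yields the claimed $\EX[\OPT']\le \sqrt{2}(1+\log n)\cdot\OPT$.

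The main obstacle is the probability bound at the second step: justifying that the uniform shift $(x,y)\in[0,1]^2$ induces uniform offsets modulo $1/2^i$ at \emph{every} level $i\in\{0,1,\dots,\log n\}$ simultaneously, so that a single random shift suffices to control all $\log n+1$ terms; this is exactly where the divisibility of $1$ by $1/2^i$ is used. Everything else is linearity of expectation and the elementary inequality between $\ell_1$ and $\ell_2$ norms on $\mathbb{R}^2$.
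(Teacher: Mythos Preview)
Your proposal is correct and follows essentially the same route as the paper: bound $\Pr[X_i(e)=1]\le(\Delta_x(e)+\Delta_y(e))\cdot 2^i\le\sqrt{2}\,\ell(e)\cdot 2^i$ via the random shift and a union bound, then sum over edges and levels by linearity of expectation. You are in fact more careful than the paper in justifying why the offset $x\bmod(1/2^i)$ is uniform at every level (the divisibility observation) and in naming the inequality $\Delta_x+\Delta_y\le\sqrt{2}\,\ell$; the paper treats both points as evident. The remark about invoking the trivial bound $\Pr\le 1$ is harmless but unnecessary, since $\tfrac{1}{2^i}\cdot\sqrt{2}\,\ell(e)\cdot 2^i=\sqrt{2}\,\ell(e)$ already, regardless of whether the probability estimate exceeds~$1$.
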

\begin{proof}
Consider any edge $e$ of length $\ell$ in {\color{mycolor}\MBST}. Let $\ell_x$ denote the \emph{$x$-span} of $e$, i.e., the difference of the $x$-coordinates of the two endpoints of $e$, and let $\ell_y$ be the \emph{$y$-span} of $e$. Observe that $\ell_x+\ell_y\le \sqrt{2}\ell$.
The probability that $e$ intersects a vertical line at any level $i$ is
\[  \PR(e\text{ intersects a vertical line at level $i$}) \le \ell_x 2^i,
\]
because there are $2^i$ uniformly spaced but randomly shifted vertical lines at level $i$.
Combining this with the probability of intersecting horizontal lines, the union bound yields
\[  \PR(X_i(e))=
\PR(e\text{ intersects a line at level $i$}) \le (\ell_x+\ell_y) 2^i\le\sqrt{2}\ell2^{i}.\]

If $e$ intersects a line at level $i$, then $\nicefrac{1}{2^i}$ is added to $\OPT'_i$, otherwise nothing is added. Thus the expected added value for $e$ to $\OPT_i'$ is 
\[ \EX\left[\frac{1}{2^i} X_i(e)\right] 
=\frac{1}{2^i}\cdot \PR(X_i(e)) 
\le \frac{1}{2^i}\cdot \sqrt{2}\ell 2^i=\sqrt{2}\ell.\]
Summation over all edges of the optimal tree $T_{\rm opt}$ gives $\EX[\OPT'_i]\le\sqrt{2}\cdot\OPT$, and summation over all levels yields  $\EX[\OPT']\le\sqrt{2}(1+\log n) \cdot\OPT$.
\end{proof}

We already know that the total length of trees constructed in level $\log n$ is at most $\sqrt{2}\cdot\OPT$. Let $E_i$ be the set of all edges that were added to $T$ at level $i<\log n$ in the bottom-up construction. We  establish a correspondence between the edges in $E_i$ and the values added to $\OPT'$. The edges of $E_i$ were added by cases (1), (2), and (3). We consider each case separately. 
\begin{itemize}[leftmargin=18pt]
    \item[(1)] Assume that the union of merge parties has $k$ points. Then we add $k-1$ edges of length at most $\sqrt{2}\cdot\nicefrac{1}{2^i}$ to $E_i$. For these points, 
    {\color{mycolor}\MBST{}}
    also needs to have at least $k-1$ connections that intersect the boundaries of the squares  involved in the merge, which have side length at least $\nicefrac{1}{2^{i+1}}$. For each such  edge we have added a value of $\nicefrac{1}{2^{i+1}}$ to $\OPT_{i+1}'$. 
    \item[(2)] To merge the two trees, we added just one edge of length at most $\sqrt{2}\cdot\nicefrac{1}{2^i}$ to $E_i$. 
    {\color{mycolor} For each merge party, \MBST{} needs at least one edge that crosses the boundary of its rectangle. These edges, however, need not be distinct (e.g., an edge can cross the boundary between the two rectangles). 
    In any case, \MBST{} needs at least one edge that crosses the boundary of one of the two  rectangles, for which we have added at least $\nicefrac{1}{2^{i+1}}$ to $\OPT'_{i+1}$.}
    \item[(3)] Assume that the monochromatic party has $k$ points. Thus we added $k$ edges of length at most $\sqrt{2}\cdot\nicefrac{1}{2^i}$ to $E_i$. Again, 
    {\color{mycolor}\MBST{}}
    needs at least $k$ edges that cross the boundary of the squares involved in the merge; and for each such edge we have added at least $\nicefrac{1}{2^{i+1}}$ to $\OPT'_{i+1}$.
\end{itemize}

Therefore the total length of all edges that were added to $T$ at level $i$ is at most $\revised{|E_i|}\cdot \nicefrac{\sqrt{2}}{2^i}$.
Analogously, the total value that has been added to $\OPT'_{i+1}$ is at least $\nicefrac{1}{2}\cdot |E_i|\cdot \nicefrac{1}{2^{i+1}}=|E_i|\cdot \nicefrac{1}{2^{i+2}}$. The multiplicative factor $\nicefrac{1}{2}$ comes from the fact that the two endpoints of an edge of the optimal tree could be involved in two separate merge operations. By summing the length of edges added to $T$ in all levels and considering Lemma~\ref{opt-lemma-logn} we get

\begin{align*}
\notag \EX[|T|]&\le\sqrt{2}\cdot\OPT+\EX\left[\sum_{i=0}^{\log n -1}{|E_i|\frac{\sqrt{2}}{2^i}}\right]
\le \sqrt{2}\cdot\OPT+4\sqrt{2}\cdot\EX\left[\sum_{i=0}^{\log n -1}{|E_i|\frac{1}{2^{i+2}}}\right]\\ \notag
&\le  \sqrt{2}\cdot\OPT+4\sqrt{2}\cdot\EX\left[\sum_{i=0}^{\log n -1}{\OPT'_{i+1}}\right]\le  \sqrt{2}\cdot\OPT+4\sqrt{2}\cdot\EX[\OPT']\\ \notag
&\le  \sqrt{2}\cdot\OPT+4\sqrt{2}\cdot \sqrt{2}(1+\log n) \cdot\OPT =O(\log n)\cdot \OPT.
\end{align*}


\subsection{Derandomization}
In the algorithm of Section~\ref{algorithm-section}, we shifted the $2\times 2$ square $Q$ by a real vector $(x,y)$, where $x$ and $y$ are chosen independently and uniformly at random from the interval $[0,1]$. We now \emph{discretize} the random shift, and choose $x$ and $y$ independently and uniformly at random from the finite set $\left\{0,\nicefrac{1}{n},\nicefrac{2}{n},\ldots , \nicefrac{n-1}{n}\right\}$. We call this process the \emph{discrete  random shift}. We show that the proof of Lemma~\ref{opt-lemma-logn} can be adapted under this random experiment with a larger constant coefficient. {\color{mycolor}Therefore  we can derandomize the algorithm by trying all random choices of $(x,y)$ for the shift and return the shortest tree over all choices. This increases the running time by a factor of $O(n^2)$.}

\begin{lemma}
Under the discrete random shift, we have
\label{opt-lemma-discrete}
    $\EX[\OPT']\leq (\sqrt{2}+2)(1+\log n) \cdot\OPT$.
\end{lemma}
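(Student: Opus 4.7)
The plan is to mimic the proof of Lemma~\ref{opt-lemma-logn} with the only change being an additive discretization slack, which is then absorbed using $\OPT\ge 1$.

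For a fixed edge $e$ of $T_{\rm opt}$ of length $\ell$, with $x$-span $\ell_x$ and $y$-span $\ell_y$, and a fixed level $i\in\{0,1,\ldots,\log n\}$, the edge crosses a vertical level-$i$ line exactly when $x \bmod \tfrac{1}{2^i}$ falls in some interval of length $\min(\ell_x,\tfrac{1}{2^i})$. The first key step is to describe the distribution of $x \bmod \tfrac{1}{2^i}$ under the discrete shift. Because $i\le\log n$, the quantity $m=n/2^i$ is a positive integer and $\tfrac{1}{2^i}=m/n$, so the $n$ possible values of $x$ partition into $2^i$ groups of $m$ values whose residues mod $\tfrac{1}{2^i}$ are exactly $\{0,\tfrac{1}{n},\ldots,\tfrac{m-1}{n}\}$; thus $x \bmod \tfrac{1}{2^i}$ is uniform on this $m$-element set.

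The second step is the counting: any interval of length $\ell_x$ contains at most $\lceil \ell_x n\rceil\le \ell_x n+1$ multiples of $1/n$, so
\[
\PR(e \text{ crosses a vertical line at level }i)\le \frac{\ell_x n+1}{m}=\ell_x 2^i+\frac{2^i}{n}.
\]
Adding the analogous horizontal bound and multiplying by $1/2^i$, the expected contribution of $e$ to $\OPT'_i$ is at most $\ell_x+\ell_y+\tfrac{2}{n}\le \sqrt{2}\,\ell+\tfrac{2}{n}$ (and the inequality remains valid in the long-edge regime $\ell_x\ge \tfrac{1}{2^i}$, since then the edge crosses a vertical line deterministically and the contribution is $\tfrac{1}{2^i}\le \ell_x\le \sqrt{2}\,\ell$).

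Summing over the $n-1$ edges of $T_{\rm opt}$ and over the $1+\log n$ levels, the main term contributes $\sqrt{2}(1+\log n)\,\OPT$ exactly as in Lemma~\ref{opt-lemma-logn}, while the slack $\tfrac{2}{n}$ per edge per level contributes at most $2(n-1)/n \cdot (1+\log n)\le 2(1+\log n)$. Because $\OPT\ge 1$ by the normalization used in the algorithm, this slack is bounded by $2(1+\log n)\,\OPT$, yielding the claimed bound $(\sqrt{2}+2)(1+\log n)\,\OPT$. The main obstacle is identifying the additive error introduced by discretization, which is resolved by the observation that the quadtree spacing $\tfrac{1}{2^i}$ is an integer multiple of the discretization step $\tfrac{1}{n}$, so the residues remain perfectly uniform and the extra error amounts to exactly one grid step per axis per edge per level.
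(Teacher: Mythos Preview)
Your proof is correct and follows essentially the same approach as the paper: both establish the per-edge-per-level bound $\EX[\tfrac{1}{2^i}X_i(e)]\le \sqrt{2}\,\ell+\tfrac{2}{n}$ and then sum over edges and levels, absorbing the additive slack via $\OPT\ge 1$. The only cosmetic difference is that the paper obtains the discrete probability bound by rounding the projections $e_x,e_y$ to half-grid endpoints (so that intersection with lines of the form $x=a/n$ is preserved and the continuous bound $\ell'_x 2^i$ applies verbatim), whereas you obtain the same bound by directly noting that $x\bmod \tfrac{1}{2^i}$ is uniform on $m=n/2^i$ grid points and counting at most $\ell_x n+1$ of them in the relevant interval.
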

\begin{proof}
Consider any edge $e=ab$ of length $\ell$ in the optimal tree $T_{\rm opt}$. The two endpoints of $e$ are points $a=(a_x,a_y)$ and $b=(b_x,b_y)$. Denote the orthogonal projection of $e$ to the $x$- and $y$-axes by $e_x$ and $e_y$, respectively, and observe that $e_x=[\min\{a_x,b_x\},\max\{a_x,b_x\}]$ and $e_y=[\min\{a_y,b_y\},\max\{a_y,b_y\}]$. 
We discretize these intervals as follows. Replace each \revised{endpoint} of the interval $e_x$ with the closest rationals of the form $\nicefrac{k}{n}+\nicefrac{1}{2n}$, and let $e'_x$ be the resulting interval; similarly we obtain $e'
_y$ from $e_y$. \revised{Observe that $e_x$ intersects a vertical line of the form $x=\nicefrac{a}{n}$, $a\in \mathbb{Z}$, if and only if $e_x'$ does.} Let $\ell_x'$ and $\ell_y'$ denote the lengths of \revised{ intervals $e'_x$ and $e'_y$, respectively.} By construction, we have $\ell'_x\leq \ell_x+\nicefrac{1}{n}$ and $\ell'_y\leq \ell_y+\nicefrac{1}{n}$.  Consequently, the probability that $e$ intersects a vertical line at any level $i$ is  
\begin{align*} 
\PR(e\text{ intersects a vertical line at level $i$}) 
& = \PR(e'_x\text{ intersects a vertical line at level $i$}) \\
&\le \ell'_x 2^i \leq \left(\ell_x+\nicefrac{1}{n}\right) 2^i,
\end{align*}
because there are $2^i$ uniformly spaced but randomly shifted vertical lines at level $i$. Combining this with the probability of intersecting horizontal lines, the union bound yields
\[  \PR(X_i(e))=
\PR(e\text{ intersects a line at level $i$}) 
\le \left(\ell_x+\ell_y+\nicefrac{2}{n}\right) 2^i
\le \left(\sqrt{2}\ell + \nicefrac{2}{n}\right) 2^i.
\]
The expected added value for $e$ to $\OPT_i'$ is 
\[\EX\left[\frac{1}{2^i} X_i(e)\right] 
=\frac{1}{2^i}\cdot \PR(X_i(e)) 
\le \frac{1}{2^i}\cdot \left(\sqrt{2}\ell + \frac{2}{n}\right)2^i
=\sqrt{2}\ell + \frac{2}{n}.\]
Summation over all edges of the optimal tree $T_{\rm opt}$ gives $\EX[\OPT'_i]\le\sqrt{2}\cdot\OPT + 2 \leq (\sqrt{2}+2)\cdot \OPT$, and summation over all levels yields  $\EX[\OPT']\le(\sqrt{2}+2)(1+\log n) \cdot\OPT$.
\end{proof}

{\color{mycolor}
The initial shifted quadtree has depth $O(\log n)$ and has $O(n)$ leaves. Thus it can be computed in $O(n\log n)$ time by a divide-and-conquer sorting-based algorithm~\cite{deBerg2008}. \revised{From a result of \cite{Biniaz2019} (cf.~Lemma~\ref{exsitence-lemma}) it follows that the bichromatic trees at the leaves of the quadtree can be computed in total $O(n\log n)$ time. From a result of \cite{HershbergerS92} and \cite{HurtadoKRT08} (cf.~Lemma~\ref{merge-point-cor})} it follows that the total merge time in each level of the quadtree is $O(n\log n)$. Summing over all levels, the running time of our algorithm is $O(n\log^2 n)$. 
}

\subsection{Generalization to more colors}
{\color{mycolor}
Our approximation algorithm for the \MPBST{} (minimum plane two-colored spanning tree) can be generalized to more colors. In this general setting, we are given a colorful point set $S$ and we want to find a spanning tree on $S$ with properly colored edges, i.e., the two endpoints of every edge should be of different colors. The same quadtree approach would give a plane properly-colored spanning tree. The analysis and the approximation ratio would be the same mainly because whenever we introduce some edges to merge two squares, the optimal solution must have the same number of edges that cross the boundaries of the squares. Also the running time remains the same because Lemma~\ref{exsitence-lemma} and Lemma~\ref{merge-point-cor} carry over to multicolored point sets.
}


{\color{mycolor}
\section{Crossing Patterns in \MBST{}}
\label{com-sec}

In this section, we prove that \MBST{} is quasi-plane for every 2-colored point set in general position (Section~\ref{quasi-section}), and then use this \revised{result} to determine the maximum number of crossings in \MBST{} for a set of $n$ bichromatic points in the plane (Section~\ref{crossings-section}).}

\subsection{Quasi-planarity}
\label{quasi-section}
Let $S$ be a set of red and blue points in the plane. To differentiate between the points we denote the red points by $r_1,r_2,\dots$ and the blue points by $b_1,b_2,\dots$. Let $T$ be a MinBST for $S$. For two distinct edges $e_1$ and $e_2$ of $T$ we denote the unique shortest path between $e_1$ and $e_2$ in $T$ by $\spath{e_1}{e_2}$. This path contains exactly one endpoint of $e_1$ and  one endpoint of $e_2$.    

\begin{figure}[ht!]
	\centering
\includegraphics[width=.27\columnwidth]{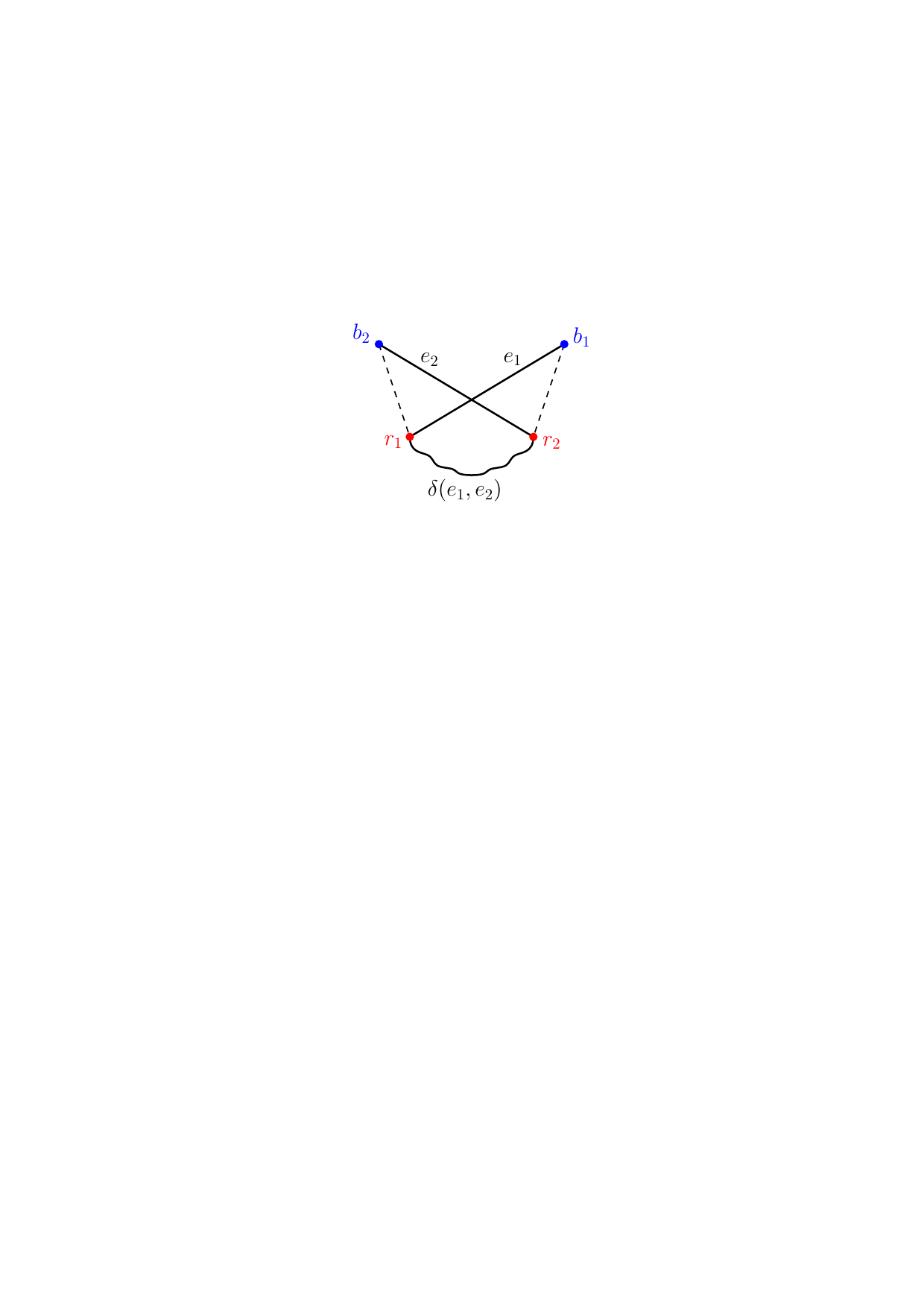}	
	\caption{Illustration of the proof of Lemma~\ref{endpoint-lemma}. Uncrossing a pair of crossing edges.}
	\label{uncross-fig}
\end{figure}

\begin{lemma}
\label{endpoint-lemma}
Let $e_1$ and $e_2$ be two edges of $T$ that cross each other. Then the endpoints of $\spath{e_1}{e_2}$ have different colors. 
\end{lemma}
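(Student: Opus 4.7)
The plan is to argue by contradiction: if the two endpoints of $\spath{e_1}{e_2}$ had the same color, we could ``uncross'' $e_1$ and $e_2$ to obtain a shorter bichromatic spanning tree, contradicting the minimality of $T$.

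First I would set up the notation. Write $e_1=u_1v_1$ and $e_2=u_2v_2$, labelled so that $u_1$ and $u_2$ are the endpoints lying on the shortest path $\spath{e_1}{e_2}$. Since $T$ is bichromatic, $u_1$ and $v_1$ have opposite colors, and likewise $u_2$ and $v_2$. Suppose, for contradiction, that $u_1$ and $u_2$ share a color; without loss of generality let them both be red, so $v_1$ and $v_2$ are blue.

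Next I would analyze what happens when $e_1$ and $e_2$ are removed from $T$. Since $T$ is a tree and $e_1,e_2$ are two distinct edges, deleting them produces exactly three connected components. The component containing the internal vertices of $\spath{e_1}{e_2}$ (as well as $u_1$ and $u_2$) I call $C_0$; the other two components $C_1$ and $C_2$ contain $v_1$ and $v_2$, respectively. I would now form a candidate tree $T'$ by adding the two edges $u_1v_2$ and $u_2v_1$ in place of $e_1,e_2$. By the component analysis, $T'$ connects $C_0$ to $C_2$ via $u_1v_2$ and then to $C_1$ via $u_2v_1$, so $T'$ is a spanning tree. Both new edges are red--blue, so $T'$ is a bichromatic spanning tree, as illustrated in Figure~\ref{uncross-fig}.

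Finally I would invoke the standard quadrilateral (crossing-diagonals) inequality: since $e_1=u_1v_1$ and $e_2=u_2v_2$ cross, they are the diagonals of the (possibly nonconvex but properly labeled) quadrilateral with vertex sequence $u_1,u_2,v_1,v_2$ around the crossing, so by the triangle inequality applied to the two triangles formed at the crossing point,
\[
|u_1v_1|+|u_2v_2| \;>\; |u_1v_2|+|u_2v_1|.
\]
Consequently $|T'|<|T|$, contradicting the fact that $T$ is a \MBST{}. Therefore the endpoints of $\spath{e_1}{e_2}$ must have different colors.

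The main obstacle is just the bookkeeping for step two: one needs to be careful that the path $\spath{e_1}{e_2}$ really contains exactly one endpoint of each of $e_1$ and $e_2$ (so that $v_1,v_2$ truly lie in separate components after the deletion), and that the two replacement edges reconnect all three components. General position ensures no degeneracies in the diagonal inequality, so the strict inequality $|T'|<|T|$ holds.
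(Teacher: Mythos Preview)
Your proof is correct and follows essentially the same approach as the paper: assume the two path endpoints share a color, swap the crossing pair $e_1,e_2$ for the two ``opposite sides'' of the convex quadrilateral, and use the triangle inequality at the crossing point to conclude the swap strictly decreases length. Your three-component analysis makes explicit why $T'$ is again a spanning tree, which the paper leaves implicit; otherwise the arguments coincide.
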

\begin{proof}
Let $e_1=(r_1,b_1)$ and $e_2=(r_2,b_2)$.
Suppose, for the sake of contradiction, that the endpoints of $\spath{e_1}{e_2}$ are of the same color, w.l.o.g.\ red. Then the endpoints of $\spath{e_1}{e_2}$  are $r_1$ and $r_2$ as in Figure~\ref{uncross-fig}. In this case, we can replace edges $(r_1,b_1)$ and $(r_2,b_2)$ of $T$ by two new edges $(r_1,b_2)$ and $(r_2,b_1)$ and obtain a new bichromatic spanning tree $T'$. By the triangle inequality \revised{(applied to each of the two triangles induced by the crossing)}, the total length of the two new edges is smaller than the total length of the two orginal edges. Hence $T'$ is shorter \revised{than} $T$, contradicting the minimality of $T$.  
\end{proof}

\begin{figure}[ht!]
	\centering
	\setlength{\tabcolsep}{0in}
	$\begin{tabular}{cc}
		\multicolumn{1}{m{.5\columnwidth}}{\centering\vspace{0pt}\includegraphics[width=.37\columnwidth]{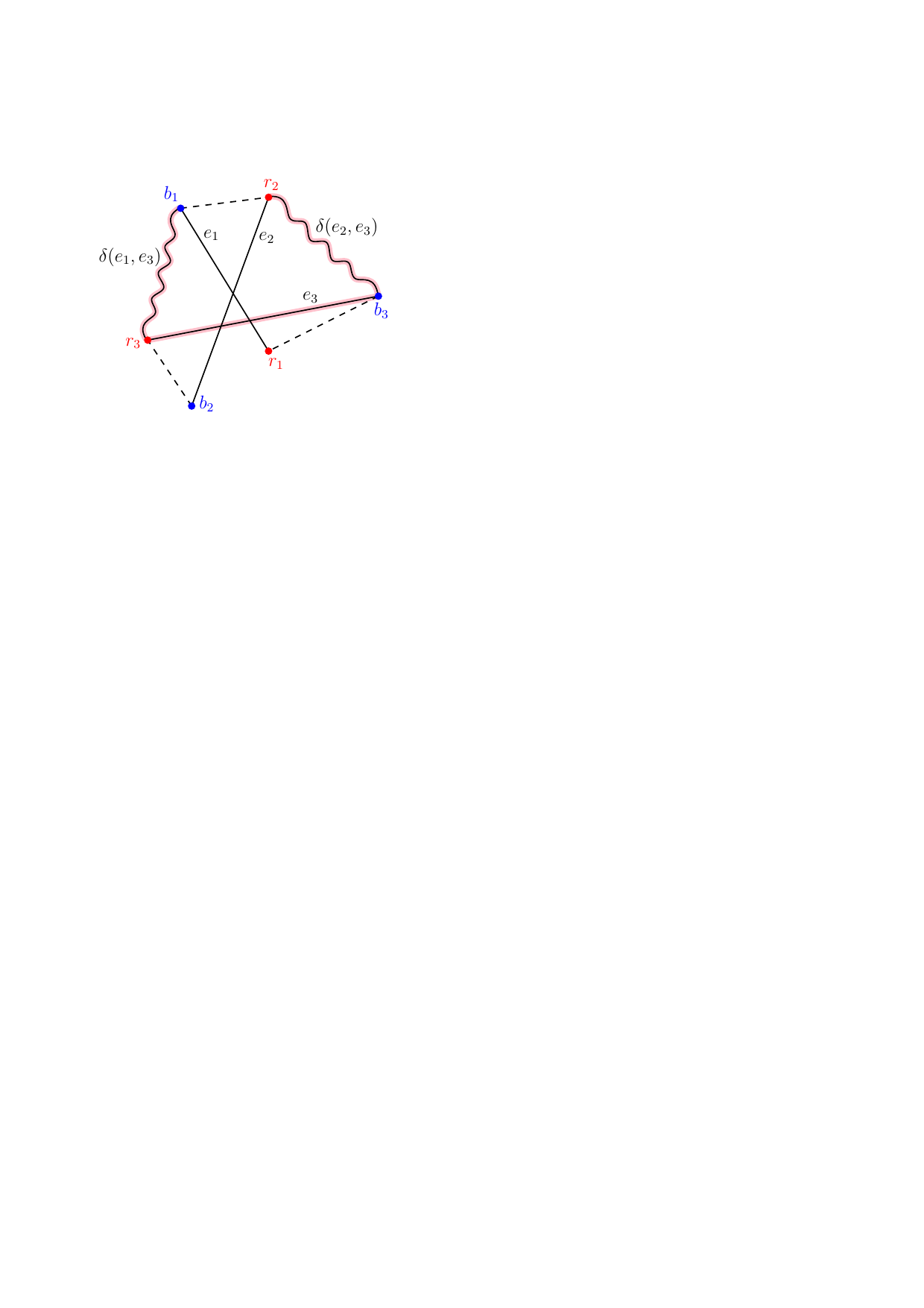}}
		&\multicolumn{1}{m{.5\columnwidth}}{\centering\vspace{0pt}\includegraphics[width=.32\columnwidth]{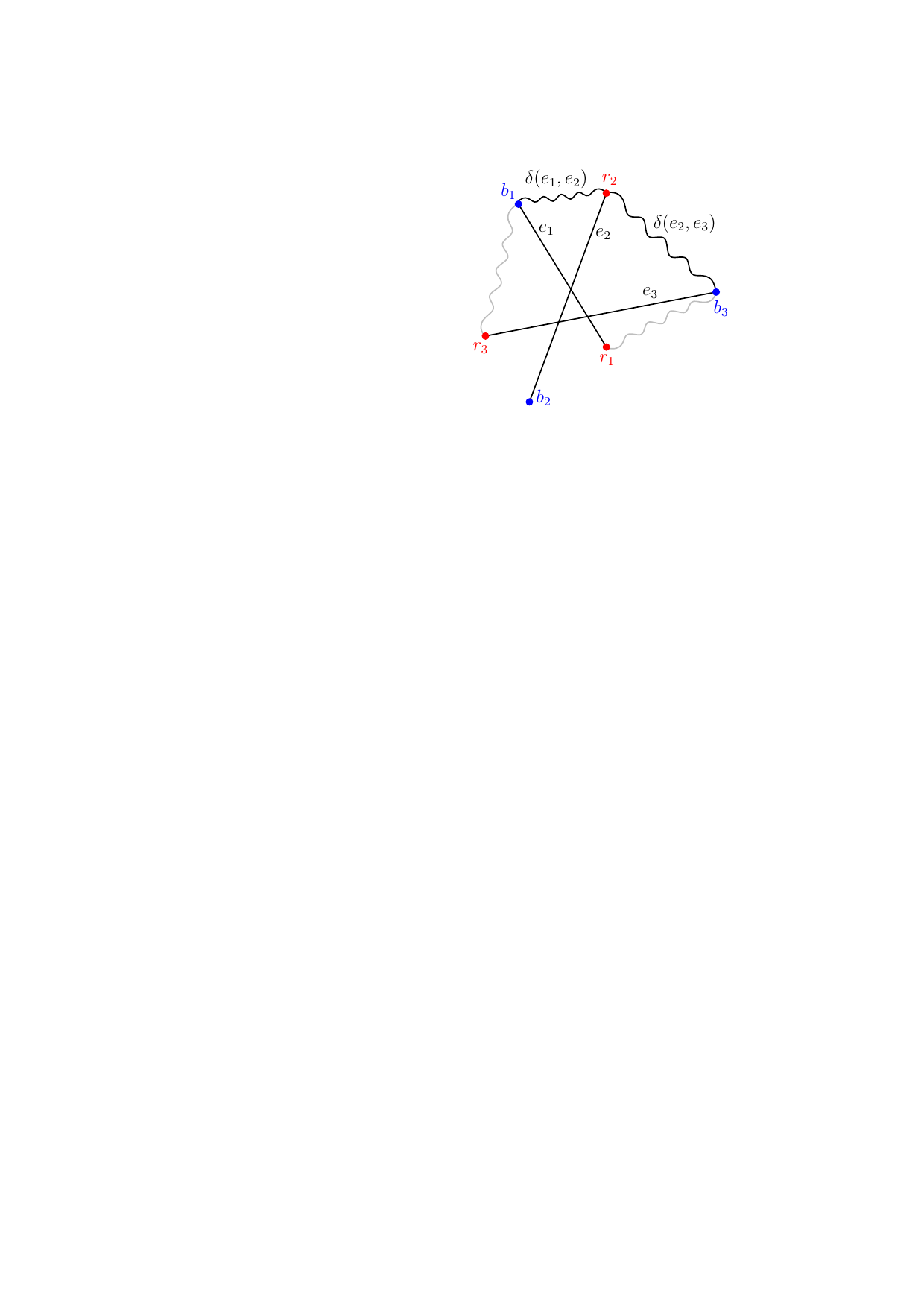}}
		\\
		(a)   &(b) 
	\end{tabular}$	
	\caption{(a) Replacing $e_1,e_2,e_3$ by $(r_1,b_3), (r_3,b_2),(r_2,b_1)$; the highlighted path is $\spath{e_1}{e_2}$. (b) Getting a cycle in the union of $\spath{e_1}{e_2}$, $\spath{e_2}{e_3}$, $\spath{e_1}{e_3}$, \revised{together} with $e_1$ or $e_3$. \revised{Gray paths represent the two possible choices for $\spath{e_1}{e_3}$.}}
	\label{three-crossing-fig}
\end{figure}

\begin{theorem}
\label{three-planar}
\revised{Every} Euclidean minimum bichromatic spanning tree is quasi-plane.
\end{theorem}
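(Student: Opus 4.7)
I would argue by contradiction: suppose $T$ contains three pairwise crossing edges $e_1=r_1b_1$, $e_2=r_2b_2$, $e_3=r_3b_3$. Removing these from $T$ leaves four subtrees $C_1,C_2,C_3,C_4$, and the three edges reconnect them into a ``component tree'' on four super-vertices, which is either a star or a path. The star case can be dispatched immediately using Lemma~\ref{endpoint-lemma}: if the central subtree $C_0$ is adjacent to the other three via $e_1,e_2,e_3$, then $C_0$ contains one endpoint $x_i$ of each $e_i$, and $\spath{e_i}{e_j}$ is a path in $C_0$ from $x_i$ to $x_j$. By Lemma~\ref{endpoint-lemma} every pair $\{x_i,x_j\}$ is bichromatic, but three elements cannot be pairwise bichromatic with only two colors.

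So the component tree is a path; WLOG it is $C_1\stackrel{e_1}{-}C_2\stackrel{e_2}{-}C_3\stackrel{e_3}{-}C_4$. The three pair-paths $\spath{e_1}{e_2}\subset C_2$, $\spath{e_2}{e_3}\subset C_3$ and $\spath{e_1}{e_3}=\spath{e_1}{e_2}\cup\{e_2\}\cup\spath{e_2}{e_3}$ all have bichromatic endpoints by Lemma~\ref{endpoint-lemma}; this forces (up to the global red/blue swap) the placement
\[
b_1\in C_1,\quad \{r_1,b_2\}\subset C_2,\quad \{r_2,b_3\}\subset C_3,\quad r_3\in C_4.
\]
Now I would try the rotated matching $R=\{(r_1,b_3),(r_3,b_2),(r_2,b_1)\}$. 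Every edge of $R$ is bichromatic, and the three edges run $C_2{-}C_3$, $C_4{-}C_2$, $C_3{-}C_1$, so $R$ reconnects the four components along a path; replacing $\{e_1,e_2,e_3\}$ by $R$ in $T$ produces a bichromatic spanning tree $T'$ of $S$.

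The core of the argument is the length comparison. Applying the triangle inequality to each crossing pair gives $|r_ib_j|+|r_jb_i|<|e_i|+|e_j|$; summing over the three pairs yields $|R|+|R'|<2(|e_1|+|e_2|+|e_3|)$, where $R'=\{(r_1,b_2),(r_2,b_3),(r_3,b_1)\}$ is the complementary rotation. Hence at least one of $R,R'$ is strictly shorter than $|e_1|+|e_2|+|e_3|$; if it is $R$ itself then $|T'|<|T|$, contradicting the minimality of $T$. The main obstacle is the residual situation in which $R$ is \emph{not} the shorter of the two—then $R'$ is shorter, but $R'$ cannot be used as a spanning-tree replacement because two of its edges $(r_1,b_2)$ and $(r_2,b_3)$ are internal to $C_2$ and $C_3$. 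To close this case I plan a combinatorial argument from the tree structure of $T$: the three paths $\spath{e_1}{e_2}$, $\spath{e_2}{e_3}$, $\spath{e_1}{e_3}$, taken together with either $e_1$ or $e_3$, must form a cycle inside $T$ (the two sub-cases of colour orientation giving the two possible shapes of $\spath{e_1}{e_3}$), contradicting that $T$ is a tree. Making this cycle argument watertight—in particular verifying that one of the forced coincidences among the internal-path endpoints really does close a cycle with $e_1$ or $e_3$—is the step I expect to require the most care.
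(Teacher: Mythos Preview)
Your star case is handled correctly and more crisply than in the paper: the paper's Case~2 reaches the same contradiction by tracking the endpoints of the three $\spath{e_i}{e_j}$ and exhibiting a cycle, whereas your pigeonhole on the three central endpoints $x_1,x_2,x_3$ is immediate.

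The genuine gap is in the path case. From the three pairwise triangle inequalities you only obtain $|R|+|R'|<2(|e_1|+|e_2|+|e_3|)$, which does not rule out $|R|\ge |e_1|+|e_2|+|e_3|$. Your proposed rescue---finding a cycle among $\spath{e_1}{e_2}$, $\spath{e_2}{e_3}$, $\spath{e_1}{e_3}$ together with $e_1$ or $e_3$---cannot work in this configuration: with your own labeling you already have $\spath{e_1}{e_3}=\spath{e_1}{e_2}\cup\{e_2\}\cup\spath{e_2}{e_3}$, so the union of the three paths is simply the path from $r_1$ to $b_3$ through $e_2$, and adjoining $e_1$ or $e_3$ merely extends it to $b_1$ or to $r_3$ without closing anything. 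The cycle phenomenon you are reaching for is the mechanism of the \emph{star} case (the paper's Case~2), which you have already dispatched.

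The paper closes the path case by a different route: it invokes a result of Biniaz, Maheshwari, and Smid that for any even set of points in general position, a perfect matching whose edges pairwise cross is the \emph{unique maximum-weight} perfect matching on that point set. Since $M=\{e_1,e_2,e_3\}$ is pairwise crossing, this gives $|R|<|M|$ outright (and $|R'|<|M|$ as well), so there is no residual case. To repair your argument you need this fact, or a self-contained proof of it for six points; the summed triangle inequalities alone are not enough.
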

\begin{proof}
\revised{Let $T$ be a Euclidean minimum bichromatic spanning tree. We prove that no three edges of $T$ can pairwise cross each other. This will imply that $T$ is quasi-plane.}
The proof proceeds by contradiction. Suppose that three edges of $T$, say $e_1=(r_1,b_1)$, $e_2=(r_2,b_2)$ and $e_3=(r_3,b_3)$, pairwise cross each other as in Figure~\ref{three-crossing-fig}. We consider the following two cases:

\begin{itemize}
    \item[1.] $\spath{e_i}{e_j}$ contains $e_k$ for 
    some permutation of the indices with $\{i,j,k\}=\{1,2,3\}$. 

    After a suitable relabeling assume that $\spath{e_1}{e_2}$ contains $e_3$. Then $\spath{e_1}{e_3}$ and $\spath{e_2}{e_3}$ are sub-paths of $\spath{e_1}{e_2}$ and they do not contain $e_2$ and $e_1$ respectively. Assume without loss of generality (and by Lemma~\ref{endpoint-lemma}) that the endpoints of $\spath{e_1}{e_2}$ are $r_2$ and $b_1$ as in Figure~\ref{three-crossing-fig}(a); $\spath{e_1}{e_2}$ is highlighted in the figure. For the rest of our argument we will use a result of \cite{BiniazMS2021} that for an even number of (monochromatic) points in the plane, a perfect matching with pairwise crossing edges is the unique maximum-weight matching  (without considering the colors). This means that $M=\{e_1,e_2,e_3\}$ is the maximum matching for the point set $\{r_1,r_2,r_3,b_1,b_2,b_3\}$. Therefore $M$ is longer than $R=\{(r_1,b_3),(r_2,b_1),(r_3,b_2)\}$, which is a (bichromatic) matching for the same points. By replacing the edges of $M$ in $T$ with the edges of $R$, we obtain a shorter tree $T'$, contradicting \revised{the minimality} of $T$. To verify that $T'$ is \revised{a tree} imagine replacing the edges one at a time. If we add $(r_1,b_3)$ first, then we create a cycle that contains $e_1$, and thus by removing  $e_1$ we obtain a valid tree. Similarly we can replace $e_2$ by $(r_3,b_2)$ and $e_3$ by $(r_2,b_1)$.

    \item[2.] $\spath{e_i}{e_j}$ does not contain $e_k$ for 
    any permutation of the indices with $\{i,j,k\}=\{1,2,3\}$. 

    Consider the path $\spath{e_1}{e_2}$ and assume without loss of generality (and by Lemma~\ref{endpoint-lemma}) that its endpoints are $r_2$ and $b_1$ as in Figure~\ref{three-crossing-fig}(b). Now consider $\spath{e_2}{e_3}$.  This path cannot have $r_3$ and $b_2$ as its endpoints because otherwise the path $\spath{e_1}{e_3}$ would contain $e_2$, contradicting the assumption of the current case. Therefore  the endpoints of $\spath{e_2}{e_3}$ are $r_2$ and $b_3$. Now consider the path $\spath{e_1}{e_3}$. If its endpoints are $r_1$ and $b_3$, then \revised{the union of $\spath{e_1}{e_2}$, $\spath{e_2}{e_3}$, and $\spath{e_1}{e_3}$ contains a path between $r_1$ and $b_1$ that does not go through $e_1$; the union of this path and $e_1$ is a cycle in $T$}. Similarly, if \revised{the} endpoints of $\spath{e_1}{e_3}$ are $r_3$ and $b_1$, then the union of $\spath{e_1}{e_2}$, $\spath{e_2}{e_3}$, $\spath{e_1}{e_3}$, and $e_3$ contains a cycle. Both cases lead to a contradiction as $T$ has no cycle.\qedhere
\end{itemize}
\end{proof}

{\color{mycolor}
\subsection{Maximum number of crossings}
\label{crossings-section}
Given that \revised{a \MBST{}} is quasi-plane (Theorem~\ref{three-planar}), one wonders how many crossings it can have. As illustrated in Figure~\ref{MinBST-fig}, the number of crossings per edge can be linear in the number of points, and the total number of crossings can be quadratic. We give tight upper bounds for both quantities (Propositions~\ref{pp:crossings}--\ref{pp:localcrossings}), and also show that \MBST{} always has a crossing-free edge (Proposition~\ref{pp:shortest-edge}).

\begin{proposition}\label{pp:shortest-edge}
For every finite set of bichromatic points in the plane in general position, every \MBST{} contains a closest bichromatic pair as an edge. Moreover, no such edge is intersected by other edges of the \MBST{}. 
\end{proposition}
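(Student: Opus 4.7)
The plan has two parts, matching the two claims in the proposition.

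For the first claim, I will use a standard matroid-style exchange argument. Suppose for contradiction that no closest bichromatic pair is an edge of the \MBST{} $T$, and fix a closest bichromatic pair $(r,b)$. Adding $(r,b)$ to $T$ creates a unique cycle $C$; since every edge of $T$ is bichromatic and $(r,b)$ is bichromatic, every edge of $C$ is bichromatic. By the contradiction hypothesis, every edge of $C$ other than $(r,b)$ has length strictly greater than $|rb|$, so removing any such edge yields a strictly shorter bichromatic spanning tree, contradicting the minimality of $T$.

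For the second claim, suppose $e_1=(r,b)\in T$ is a closest bichromatic pair and that another edge $e_2=(r',b')\in T$ crosses $e_1$ at an interior point $p$. I will exhibit a strictly shorter bichromatic spanning tree obtained from $T$ by a single edge swap; the usual two-edge ``uncrossing'' is generally not available in a \MBST{} (see Figure~\ref{MinBST-fig} and the discussion around Lemma~\ref{endpoint-lemma}), which is why a one-edge swap is the right tool. By Lemma~\ref{endpoint-lemma}, the two endpoints of $\spath{e_1}{e_2}$ have different colors, so they form either the pair $(r,b')$ (Case~1) or the pair $(b,r')$ (Case~2).

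The geometric engine uses the crossing point $p$: $|rb|=|rp|+|pb|$ and $|r'b'|=|r'p|+|pb'|$, while the strict triangle inequality gives $|rb'|<|rp|+|pb'|$ and $|r'b|<|r'p|+|pb|$, the strictness coming from general position, since $r,p,b'$ collinear would force $r,b,b'$ collinear and $r',p,b$ collinear would force $r,b,r'$ collinear. Combining these with the closest-pair inequalities $|rb|\le|rb'|$ and $|rb|\le|r'b|$ yields $|pb|<|pb'|$ in Case~1 and $|rp|<|r'p|$ in Case~2, and therefore $|r'b|<|r'b'|$ in Case~1 and $|rb'|<|r'b'|$ in Case~2.

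It remains to check that the proposed swap preserves the spanning tree property. In Case~1, the unique path in $T$ from $r'$ to $b$ is obtained by concatenating $e_2$, $\spath{e_1}{e_2}$, and $e_1$, so it contains the edge $e_2$; hence $T-e_2+(r',b)$ is a bichromatic spanning tree strictly shorter than $T$. Case~2 is symmetric: the path in $T$ from $r$ to $b'$ contains $e_2$, so $T-e_2+(r,b')$ is a bichromatic spanning tree strictly shorter than $T$. Either case contradicts the minimality of $T$ and establishes the claim. The main obstacle I anticipate is precisely this last validity check, which is why Lemma~\ref{endpoint-lemma} plays an essential role: it tells us the color pattern on the endpoints of $\spath{e_1}{e_2}$, and hence which of the two symmetric swaps is actually a valid edge replacement.
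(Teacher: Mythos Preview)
Your proof is correct and follows essentially the same approach as the paper's: the first part is the identical exchange argument, and the second part uses Lemma~\ref{endpoint-lemma} to pin down the endpoints of $\spath{e_1}{e_2}$ and then performs the same single-edge swap $T-e_2+(r',b)$ (resp.\ $T-e_2+(r,b')$). The only cosmetic difference is how the key inequality $|r'b|<|r'b'|$ is established: you argue directly via the crossing point $p$, whereas the paper assumes $|r'b|\ge|r'b'|$, adds this to the closest-pair inequality $|rb'|\ge|rb|$, and contradicts the convex-quadrilateral diagonal inequality---the same geometry packaged slightly differently.
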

\begin{proof}
We prove both parts by contradiction. For the first part assume that the \MBST{} does not contain an edge between any closest bichromatic pair. 
Let $\{r_1,b_1\}$ be a closest bichromatic pair. By adding $(r_1,b_1)$ to the tree we obtain a cycle in which $(r_1,b_1)$ is the shortest edge. By removing any other edge from the cycle we obtain a bichromatic spanning tree of a shorter length. This contradicts the minimality of the original \MBST{}. 

For the second part let $\{r_1,b_1\}$ be a closest bichromatic pair that appears as an edge in the \MBST{}.  \revised{Hence} $e_1=(r_1,b_1)$ crosses some edge $e_2=(r_2,b_2)$. Without loss of generality (and by Lemma~\ref{endpoint-lemma}), assume that the endpoints of $\spath{e_1}{e_2}$ are $r_1$ and $b_2$. If $|r_2b_1|<|r_2b_2|$, then by replacing $(r_2,b_2)$ with $(r_2,b_1)$ we obtain a shorter bichromatic spanning tree, a contradiction. Assume that $|r_2b_1|\ge|r_2b_2|$. Since $\{r_1,b_1\}$ is a closest bichromatic pair we have $|r_1b_2|\ge |r_1b_1|$. Adding the two inequalities yields $|r_1b_2|+|r_2b_1|\ge|r_1b_1|+|r_2b_2|$. Note, however, that the vertices of any two crossing edges form a convex quadrilateral. By the triangle inequality, the total length of the two diagonals of a convex quadrilateral is strictly \revised{more} than the total length of any pair of opposite edges, yielding $|r_1b_2|+|r_2b_1|<|r_1b_1|+|r_2b_2|$, a contradiction.
\end{proof}
}

{\color{mycolor}
\begin{proposition}\label{pp:crossings}
For every set of \revised{$n\geq 2$} bichromatic points in the plane in general position, \revised{every} \MBST{} has at most 
$\revised{\lfloor\nicefrac{n^2}{4}\rfloor}-n+1$ 
crossings, and this bound is the best possible. 
\end{proposition}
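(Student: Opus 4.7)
My plan is to reduce the problem to a Mantel-style count on an auxiliary crossing graph. Given a \MBST{} $T$ on $S$, I form the \emph{crossing graph} $H$ whose vertex set is $E(T)$ and whose edges are the pairs of tree-edges that cross in the plane, so that $|E(H)|$ equals the number of crossings in $T$. Two properties of $H$ are already in hand: by Theorem~\ref{three-planar}, $T$ is quasi-plane, which is equivalent to $H$ being triangle-free; and by Proposition~\ref{pp:shortest-edge}, the edge of $T$ between a closest bichromatic pair is not crossed by any other edge, so it is an isolated vertex in $H$. Hence every edge of $H$ lies in the induced subgraph on the remaining $n-2$ vertices, which is still triangle-free, and Mantel's theorem yields $|E(H)|\le \lfloor (n-2)^2/4\rfloor$. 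The algebraic identity $\lfloor (n-2)^2/4\rfloor = \lfloor n^2/4\rfloor - n + 1$ then gives the stated upper bound.

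For tightness, my plan is to construct a point set whose \MBST{} attains equality in Mantel's theorem on its crossing graph. Set $k_r = \lceil(n-2)/2\rceil$ and $k_b = \lfloor(n-2)/2\rfloor$, and fix constants $0 < \delta \ll 1 \ll R$. Place hubs $r_0 = (\delta, 0)$ and $b_0 = (-\delta, 0)$ near the origin, together with a small general-position cluster of $k_r$ red points around $(R, R)$ and a small cluster of $k_b$ blue points around $(-R, R)$. A distance computation shows that for $R$ sufficiently large, each cluster red has $b_0$ as its unique nearest blue (distance $\approx R\sqrt{2}$, versus $\approx 2R$ to any cluster blue), and symmetrically each cluster blue has $r_0$ as its unique nearest red. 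A Kruskal-style argument then identifies the \MBST{} as the ``double-spoke'' tree formed by the short center edge $(r_0, b_0)$ together with all spokes $(b_0, r_i)$ and $(r_0, b_j)$. Every spoke from $b_0$ leaves the origin heading up and to the right, while every spoke from $r_0$ leaves heading up and to the left, so each opposite-type pair of spokes crosses near the origin, producing exactly $k_r\, k_b = \lfloor (n-2)^2/4\rfloor$ crossings.

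The principal obstacle I foresee is not the upper bound, which is essentially a syllogism from the two earlier results, but the verification in the construction that the double-spoke tree is genuinely the \MBST{}. One must rule out alternative trees that trade a spoke (length $\approx R\sqrt{2}$) for a cluster-to-cluster edge (length $\approx 2R$), or that route the cluster points through one another in a zig-zag manner; the strict inequality $R\sqrt{2}<2R$, combined with a generic perturbation to enforce general position and make the \MBST{} unique, should dispatch these cases.
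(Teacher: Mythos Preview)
Your proof is correct and follows essentially the same route as the paper: both form the crossing graph on the $n-1$ tree edges, invoke Theorem~\ref{three-planar} to get triangle-freeness and Proposition~\ref{pp:shortest-edge} to get an isolated vertex, and then apply Mantel/Tur\'an on the remaining $n-2$ vertices; your tightness construction with two hubs and two far clusters is precisely the configuration of Figure~\ref{MinBST-fig} that the paper cites (and the paper, like you, does not spell out in detail the verification that the double-star is indeed the \MBST{}).
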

\begin{proof}
\revised{To verify that the claimed bound can be attained},
consider the construction in Figure~\ref{MinBST-fig} where \revised{the two top clusters have $\lfloor n/2\rfloor{-}1$ and $\lceil n/2\rceil{-}1$ points. Then the total number of crossings in $\text{\MBST}(S)$ is $(\lfloor n/2\rfloor{-}1)\cdot(\lceil n/2\rceil{-}1)$,
which is equal to $\revised{\lfloor\nicefrac{n^2}{4}\rfloor}-n+1$ because $n$ is an integer.}

For an upper bound, let $S$ be a set of $n$ bichromatic points in general position. Define the \emph{crossing graph} $G_{\rm cr}$ of $\text{\MBST}(S)$, where the vertices of $G_{\rm cr}$ correspond to the edges of $\text{\MBST}(S)$ and edges of $G_{\rm cr}$ represent crossings between the edges of $\text{\MBST}(S)$. Note that $G_{\rm cr}$ has $n-1$ vertices where one of them is of degree 0 by Proposition~\ref{pp:shortest-edge}. By Theorem~\ref{three-planar}, $G_{\rm cr}$ is triangle-free. Therefore, by Tur\'an's theorem~\cite{Turan41}, $G_{\rm cr}$ has at most 
\revised{$(\lfloor n/2\rfloor{-}1)\cdot(\lceil n/2\rceil{-}1)$}
edges. Consequently, $\text{\MBST}(S)$ has at most this many crossings.
\end{proof}

\begin{proposition}\label{pp:localcrossings}
For every set of \revised{$n\geq 3$} bichromatic points in the plane in general position, \revised{every} edge of a \MBST{} crosses at most $n-3$ other edges, and this bound is the best possible.
\end{proposition}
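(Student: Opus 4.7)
The plan is to prove the upper bound by a short edge-counting contradiction, and to verify tightness by a suitable specialization of the construction from Proposition~\ref{pp:crossings}.

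For the upper bound, I would take any \MBST{} $T$ and any edge $e=(r,b)$ of $T$, and show that at least one of the remaining $n-2$ edges of $T$ does not cross $e$. The key observation is that, in general position, any two crossing segments share no endpoint, so every edge of $T$ that crosses $e$ must avoid both $r$ and $b$. Suppose for contradiction that all $n-2$ other edges of $T$ cross $e$. Then these $n-2$ edges lie entirely within the vertex set $V(T)\setminus\{r,b\}$, which has only $n-2$ vertices. But as a subset of the tree $T$, these edges form a forest, and a forest on $n-2$ vertices has at most $n-3$ edges, contradicting the assumption that it has $n-2$ edges. Therefore $e$ is crossed by at most $n-3$ other edges.

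For tightness, I would reuse the construction from Figure~\ref{MinBST-fig} used in the proof of Proposition~\ref{pp:crossings}, but with the asymmetric split where the two ``top clusters'' have sizes $k_1=1$ and $k_2=n-3$ instead of the balanced split $\lfloor n/2\rfloor-1$ and $\lceil n/2\rceil-1$. In that construction, each of the $k_1$ edges associated with the first cluster crosses each of the $k_2$ edges associated with the second cluster, contributing $k_1\cdot k_2$ crossings in total. Specializing to $k_1=1$ yields a single edge of the \MBST{} that is crossed by all $n-3$ edges associated with the larger cluster, matching the upper bound.

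I do not anticipate a real obstacle. The upper bound rests on a one-line geometric observation (general position forbids shared endpoints between crossing edges) followed by elementary forest edge counting. For tightness, the only thing to check is that the already-verified construction template behind Proposition~\ref{pp:crossings} continues to produce a \MBST{} under the asymmetric parameter choice $k_1=1$, $k_2=n-3$; since that argument works for arbitrary $k_1,k_2$ with $k_1+k_2=n-2$, nothing new is required.
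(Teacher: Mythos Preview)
Your proposal is correct. The tightness half is identical to the paper's: both specialize the Figure~\ref{MinBST-fig} construction to cluster sizes $1$ and $n-3$.

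For the upper bound, however, you take a genuinely different route. The paper invokes Proposition~\ref{pp:shortest-edge} (a closest bichromatic pair is a crossing-free edge of the \MBST{}), and then simply observes that among the $n-2$ edges other than $e$, at least one---namely that crossing-free edge---cannot cross $e$. Your argument instead uses only the elementary fact that crossing segments share no endpoint, so edges crossing $e$ live on the $n-2$ vertices $V(T)\setminus\{r,b\}$ and, being a subforest of $T$, number at most $n-3$. Your proof is more self-contained: it avoids the geometric content of Proposition~\ref{pp:shortest-edge} (closest-pair reasoning plus Lemma~\ref{endpoint-lemma}) and would work for any plane straight-line tree, not just a \MBST{}. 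The paper's version is shorter given that Proposition~\ref{pp:shortest-edge} is already available, and it ties the bound to a structural property specific to \MBST{}.
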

\begin{proof}
\revised{To verify that the claimed bound can be attained, consider the construction in Figure~\ref{MinBST-fig} and replace one of the top clusters by a single point and the other by $n-3$ points.} Then all points in this cluster have degree 1 in $\text{\MBST}(S)$, these $n-3$ leaves all cross one edge of \MBST{}. 

For the upper bound, notice that a \MBST{} has $n-1$ edges, one of which is crossing-free by Proposition~\ref{pp:shortest-edge}. Consequently, an edge in \revised{a \MBST{}} can cross at most $n-3$ other edges. 
\end{proof}
}

\section{Conclusions and Open Problems}\label{sec:conclusion}

{\color{mycolor}
We conclude with a collection of open problems raised by our results. 
We have presented a $O(\log n)$-approximation algorithm for the \MPBST{} problem for a set of $n$ bichromatic points in the plane, and showed that $\rho_n\leq O(\log n)$. Recall that the current best lower bound is $\rho_n\geq 3/2$ for all $n\geq 4$~\cite{Grantson2005}. It remains open whether a constant-factor approximation is possible, whether the problem is APX-hard, and whether $\rho_n$ is bounded by a constant. 

It is also natural to investigate whether there is an (approximation) algorithm that, given a bichromatic point set and an integer $d$, finds a \emph{minimum plane bichromatic tree of maximum degree at most $d$} (or reports that none exists). It is known that any set of $n$ red and $n$ blue points in general position admits a plane bichromatic spanning tree of maximum degree
at most three~\cite{Kaneko1998}; but there are $n$ red and $n$ blue points in convex position that do not admit a bichromatic plane spanning path~\cite{AkiyamaU90a}. For the general case of $n$ red and $m$ blue points, with $n\ge m$, there exists a plane bichromatic spanning tree of maximum degree at most $\max\{3,\lceil\nicefrac{n-1}{m}\rceil+1\}$ and this is the best upper bound \cite{Biniaz2018a}.
}

We have shown that \MBST{} is quasi-plane, which means that the crossing graph $G_{\rm cr}$ of \MBST{} is triangle-free. Figure~\ref{MinBST-fig} shows that  $G_{\rm cr}$ can have 4-cycles (and even cycles of any lengths). 
Can the crossing graph $G_{\rm cr}$ of \MBST{} contain an odd cycle (e.g., a 5-cycle)? 
Can every \MBST{} be decomposed into a constant number of planar straight-line graphs? 
%

\section{Acknowledgement}
This work was initiated at the Eleventh Annual Workshop on Geometry and Graphs, held at the Bellairs Research Institute in Holetown, Barbados in March 2024. The authors thank the organizers and the participants.

\bibliographystyle{plainurl}

\bibliography{MPBST.bib}
\end{document}